\newcommand{\cmark}{\ding{51}}%
\newcommand{\xmark}{\ding{55}}%
  \providecommand\BibTeX{{%
    \normalfont B\kern-0.5em{\scshape i\kern-0.25em b}\kern-0.8em\TeX}}}
\begin{document}
\fancyhead{}
\begin{sloppypar}

\title{Decoupled Side Information Fusion for Sequential Recommendation}

\author{Yueqi Xie}
\authornote{Both authors contributed equally.}
\affiliation{%
  \institution{HKUST}
  \country{ }
}
\email{yxieay@connect.ust.hk}

\author{Peilin Zhou}
\authornotemark[1]
\affiliation{%
  \institution{Upstage}
  \country{ }
}
\email{zhoupalin@gmail.com}

\author{Sunghun Kim}
\affiliation{%
  \institution{HKUST}
  \country{ }
}
\email{hunkim@ust.hk}

\begin{abstract}
Side information fusion for sequential recommendation (SR) aims to effectively leverage various side information to enhance the performance of next-item prediction. Most state-of-the-art methods build on self-attention networks and focus on exploring various solutions to integrate the item embedding and side information embeddings before the attention layer.
However, our analysis shows that the early integration of various types of embeddings limits the expressiveness of attention matrices due to a \textit{rank bottleneck} and constrains the flexibility of gradients. Also, it involves mixed correlations among the different heterogeneous information resources, which brings extra disturbance to attention calculation.
Motivated by this, we propose \textbf{D}ecoupled Side \textbf{I}nformation \textbf{F}usion for \textbf{S}equential \textbf{R}ecommendation (DIF-SR), which moves the side information from the input to the attention layer and decouples the attention calculation of various side information and item representation. We theoretically and empirically show that the proposed solution allows higher-rank attention matrices and flexible gradients to enhance the modeling capacity of side information fusion.
Also, auxiliary attribute predictors are proposed to further activate the beneficial interaction between side information and item representation learning.
Extensive experiments on four real-world
datasets demonstrate that our proposed solution stably outperforms state-of-the-art SR models. Further studies show that our proposed solution can be readily incorporated into current attention-based SR models and significantly boost performance.
Our source code is available
at https://github.com/AIM-SE/DIF-SR.

\end{abstract}

\begin{CCSXML}
<ccs2012>
   <concept>
       <concept_id>10002951.10003317.10003347.10003350</concept_id>
       <concept_desc>Information systems~Recommender systems</concept_desc>
       <concept_significance>500</concept_significance>
       </concept>
 </ccs2012>
\end{CCSXML}

\ccsdesc[500]{Information systems~Recommender systems}

\keywords{Sequential Recommendation, Attention Mechanism, Side Information Fusion}

%% A "teaser" image appears between the author and affiliation
%% information and the body of the document, and typically spans the
%% page.

%%
%% This command processes the author and affiliation and title
%% information and builds the first part of the formatted document.
\maketitle

\section{Introduction}
\label{intro}
Sequential recommendation (SR) aims to model users' dynamic preferences from their historical behaviors and make next item recommendations. SR has become an increasingly appealing research topic with wide and practical usage in online scenarios. Multiple deep learning-based solutions~\cite{liu2016context,yu2016dynamic,hidasi2015session} are proposed, and self-attention~\cite{vaswani2017attention} based methods~\cite{kang2018self, sun2019bert4rec,wu2020sse} become mainstream solutions with competitive performances. Among the recent improvements on self-attention based methods, one important branch is related to side information fusion~\cite{zhang2019feature,zhou2020s3,yuan2021icai,liu2021non}. Instead of using item IDs as only item attribute as prior solutions, the side information, such as other item attributes and ratings, is taken into consideration. Intuitively, the highly-related information can benefit recommendations. 
However, how to effectively fuse side information into the recommendation process remains a challenging open issue.
\setlength{\belowcaptionskip}{-0.1cm}   %调整图片标题与下文距离
\begin{figure}[t]
    \centering
    \includegraphics[width=8 cm]{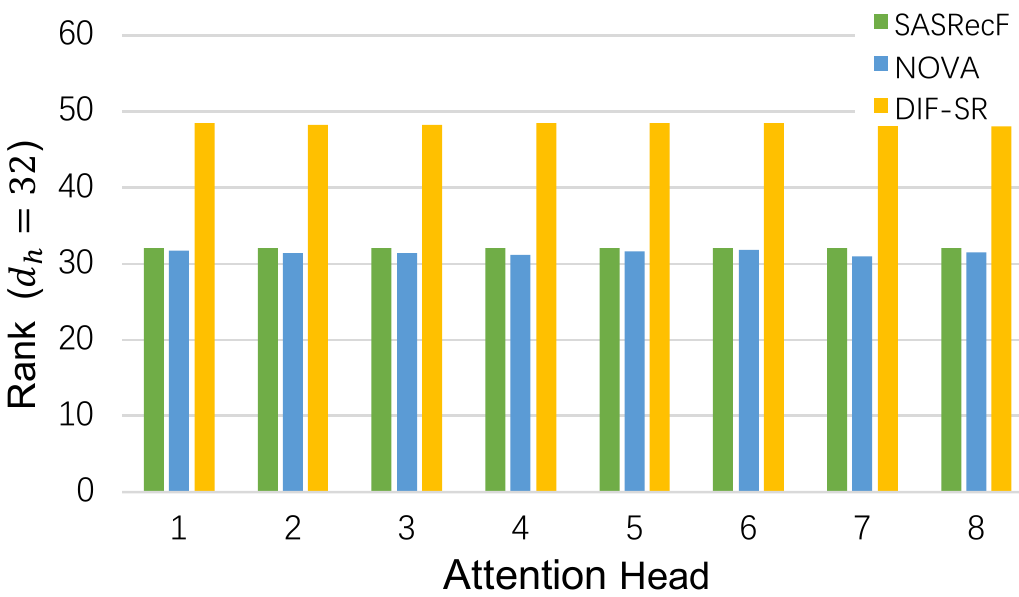}
    \caption{{Rank of attention matrices:} 
    A comparison of the average rank of attention score matrices of early-integrated embedding based solutions, i.e., SASRecF and NOVA, and our proposed DIF-SR.
    The early-integration of embeddings leads to lower rank of the attention matrices and limits the expressiveness.}
    \label{fig:rank}
\end{figure}

Many research efforts have been devoted to fusing side information in various stages of recommendation. Specifically, an early trial FDSA~\cite{zhang2019feature} combines two separate branches of self-attention blocks for item and feature and fuses them in the final stage. S$^3$-Rec~\cite{zhou2020s3} uses self-supervised attribute prediction tasks in the pretraining stage. However, the independent learning of item and side information representation in FDSA and the pretraining strategy in S$^3$-Rec are hard to allow side information to directly interact with item self-attention.

Recently, several studies design solutions integrating side information embedding into the item representation before the attention layer to enable side information aware attention.
ICAI-SR~\cite{yuan2021icai} utilizes attribute-to-item aggregation layer before attention layer to integrate side information into item representation with separate attribute sequential models for training. 
NOVA~\cite{liu2021non} proposes to feed both the pure item id representation and side information integrated representation to the attention layer, where the latter is only used to calculate the attention key and query and keeps the value non-invasive.

Despite the remarkable improvement, several drawbacks remain for the current early-integration based solutions~\cite{liu2021non,yuan2021icai}.
First, we observe that integrating embedding before the attention layer suffers from a \textit{rank bottleneck} of attention matrices, which leads to inferior attention score representation capacity.
It is because of the fact that the rank of attention matrices of prior solutions is inherently bounded by multi-head query-key down-projection size $d_h$, which is usually smaller than the matrices could reach.  
We further theoretically explain such phenomenon in Sec~\ref{method:DIF-theory}.
Second, the attention performed on compound embedding space may lead to random disturbance, where the mixed embeddings from various information resources inevitably attend to unrelated information. The similar drawback for positional encoding in input layer is discussed~\cite{DBLP:conf/iclr/KeHL21,DBLP:conf/sigir/FanLLZXW21}.
Third, since the integrated embedding remains impartible in the whole attention block, early-integrating forces the model to develop complex and heavy integration solutions and training schemes to enable flexible gradients for various side information. With a simple fusion solution, such as widely-used addition fusion, all embeddings share the same gradient for training, which limits the model from learning the relative importance of side-information encodings with respect to item embeddings.
% in terms of rank.  
% limits the model from adjusting dimension and learning the relative importance of side-information encodings with respect to item embeddings.
% Therefore, despite the progress, the current solutions still do not solve the key challenges of side information fusion for SR.
%; also same dimension embeddings may lead to over-parameterization of attributes with less information

To overcome the limitations, we propose \textbf{D}ecoupled Side \textbf{I}nformation \textbf{F}usion for \textbf{S}equential \textbf{R}ecommendation (DIF-SR). Inspired by the success of decoupled positional embedding~\cite{chen2021simple,DBLP:conf/sigir/FanLLZXW21}, 
% we introduce and extend the idea 
we propose to thoroughly explore and analyze the effect of decoupled embedding for side information fusion in the sequential recommendation. 
Specifically, instead of early integration, we move the fusion process from the input to the attention layer. We decouple various side information as well as item embedding through generating key and query separately for every attribute and item in the attention layer. Afterward, we fuse all the attention matrices with fusion function. This simple and effective strategy directly enables our solution to break the \textit{rank bottleneck}, thus enhancing the modeling capacity of the attention mechanism. Fig.~\ref{fig:rank} shows an illustration of rank comparison between current early-integration based solutions and ours with the same embedding size $d$  and head projection size $d_h$. 
Our solution avoids unnecessary randomness of attention caused by the mixed correlation of heterogeneous embeddings.
Also, it enables flexible gradients to adaptively learn various side information in different scenarios.
We further propose to utilize the light \textbf{A}uxiliary \textbf{A}ttribute \textbf{P}redictors (AAP) in a multi-task training scheme to better activate side information to cast beneficial influence on learning final representation.

% , we can avoid compound embedding space attention and only let related information perform valid and effective attention. 
% Also,
% we theoretically and empirically show that fusing decoupled side information with attention matrices fusion directly allows for higher rank attention, thus enhancing the modeling capacity of the attention mechanism.
% It enables flexible dimension embedding to overcome over-parameterization of side information and adaptive learning of the relative importance of various side information in different scenarios.
Experimental results show that our proposed method outperforms the existing basic SR methods~\cite{sun2019bert4rec,hidasi2015session,tang2018personalized,kang2018self} and competitive side information integrated SR methods~\cite{zhou2020s3,liu2021non,yuan2021icai} on four widely-used datasets for sequential recommendation, including Amazon Beauty, Sports, Toys, and Yelp. Moreover, our proposed solution can be incorporated into self-attention based basic SR models easily.
Further study on two representative models~\cite{kang2018self,sun2019bert4rec} shows that significant improvement is achieved when basic SR models are incorporated with our modules. Visualization on attention matrices also offers an interpretation of the rationality of decoupled attention calculation and attention matrices fusion.
% Further analysis on proposed auxiliary attribute predictors and benefits of low-dimensional attribute embedding demonstrate the functionality of the modules.  

Our contribution can be summarized as follows:
\begin{itemize}
    \item We present the DIF-SR framework, which can effectively leverage various side information for sequential recommendation tasks with higher attention representation capacity and flexibility to learn the relative importance of side information.
    \item We propose the novel DIF attention mechanism and AAP-based training scheme, which can be easily incorporated into attention-based recommender systems and boost performance.
    \item We theoretically and empirically analyze the effectiveness of the proposed solution. We achieve state-of-the-art performance on multiple real-world datasets. Comprehensive ablation studies and in-depth analysis demonstrate the robustness and interpretability of our method.

\end{itemize}
\section{Related work}
\subsection{Sequential Recommendation}
Sequential recommendation (SR) models aim to capture users’ preferences from their historical sequential interaction data and make the next-item prediction.
Early SR studies~\cite{DBLP:conf/uai/ZimdarsCM01,rendle2010factorization,kabbur2013fism,he2016fusing} are often based on the Markov Chain assumption and Matrix Factorization methods, which are hard to deal with complex sequence patterns.
Afterward, inspired by the successful application of deep learning techniques on sequential data, many researchers propose to utilize neural networks such as Convolutional Neural Networks (CNNs)~\cite{tang2018personalized,yuan2019simple}, Recurrent Neural Networks (RNNs)~\cite{quadrana2017personalizing, ma2019hierarchical, yan2019cosrec, zheng2019gated, peng2021ham,hidasi2015session}, Graph Neural Networks (GNNs)~\cite{chang2021sequential} and transformer~\cite{kang2018self,sun2019bert4rec,wu2020sse} to model user-item interactions. 
Specifically, self-attention based methods, such as SASRec~\cite{kang2018self} and BERT4Rec~\cite{sun2019bert4rec} have been regarded with strong potential due to their ability to capture long-range dependencies between items. Recently, many improvements on self-attention based solutions are proposed, taking into consideration personalization~\cite{wu2020sse}, item similarity~\cite{DBLP:conf/recsys/0008P020}, consistency~\cite{he2020consistency}, multiple interests~\cite{,DBLP:conf/sigir/FanLLZXW21}, information dissemination~\cite{wu2019dual}, pseudo-prior items augmentation~\cite{liu2021augmenting}, motifs~\cite{cui2021motif}, etc.
However, most of the current SR methods often assume that only the item IDs are available and do not take the side information like item attributes into consideration, ignoring the fact that such highly-related information could provide extra supervision signals. Different from basic SR solutions, our work is tightly related to the side information aware SR, which aims to design effective fusion methods to better utilize various side information.

\subsection{Side Information Fusion for Sequential Recommendation}
Side information aware SR has already been recognized as an important research branch of recommender systems. 
In recent years, side information fusion has also been widely explored in attention-based SR models.
FDSA~\cite{zhang2019feature} adopts different self-attention blocks to encode item and side information, and their representations are not fused until in the final stage.
S$^3$-Rec~\cite{zhou2020s3} adopts pretraining to include the side information. Specifically, two pretraining tasks are designed to utilize these meaningful supervision signals. These approaches verify that side information can beneficially help the next-item prediction. However, they do not effectively and directly use side information to help the attentive aggregation of item representation and prediction.

Several recent works seek to integrate side information embedding to item embedding before the attention layer so that the attentive learning process of final representation can take into consideration side information.
Early works like p-RNN~\cite{hidasi2016parallel} often use simple concatenation to directly inject the side information into item representation. ICAI-SR~\cite{yuan2021icai} utilizes an item-attribute aggregation model to compute the item and attribute embeddings based on the constructed heterogeneous graph and then feeds these fused embeddings to the sequential model to predict the next item. Parallel entity sequential models are used for training. 
NOVA~\cite{liu2021non} proposes to fuse side information without hurting the consistency of item embedding space, where integrated embeddings are only used for keys and queries and values are obtained from pure item id embeddings.
Despite the improvement, we argue that these methods, which build on the integrated embedding and coupled attention calculation among heterogeneous information resources, suffer from the drawbacks including limited representation capacity, inflexible gradient and compound embedding space.
% the attention scores are still calculated based on the directly fused side information. 
% The non-invasive self-attention (NOVA) mechanism

% This coupled way might decrease the rank of attention matrix and thus reduce the representation capacity of attention-based encoders. Besides, it forces the dimensions of the item embedding and side information embedding to be same, which not only lacks flexibility but also might lead to over-parameterization problem. 
%The difference of the proposed DIF-SR from NOVA-BERT lies in that we decouple the attention calculation process of various side information to endow attention-based SR models with stronger representation ability and flexible training gradient, and the interactions between side information and items are also further enhanced by elaborately adding a next-item-attribute predictor upon the learned item representation during the training period. 

%The most related work of this paper is NOVA-BERT~\cite{liu2021non}, where 
\begin{figure*}[t]
  \centering
  \includegraphics[width=\textwidth]{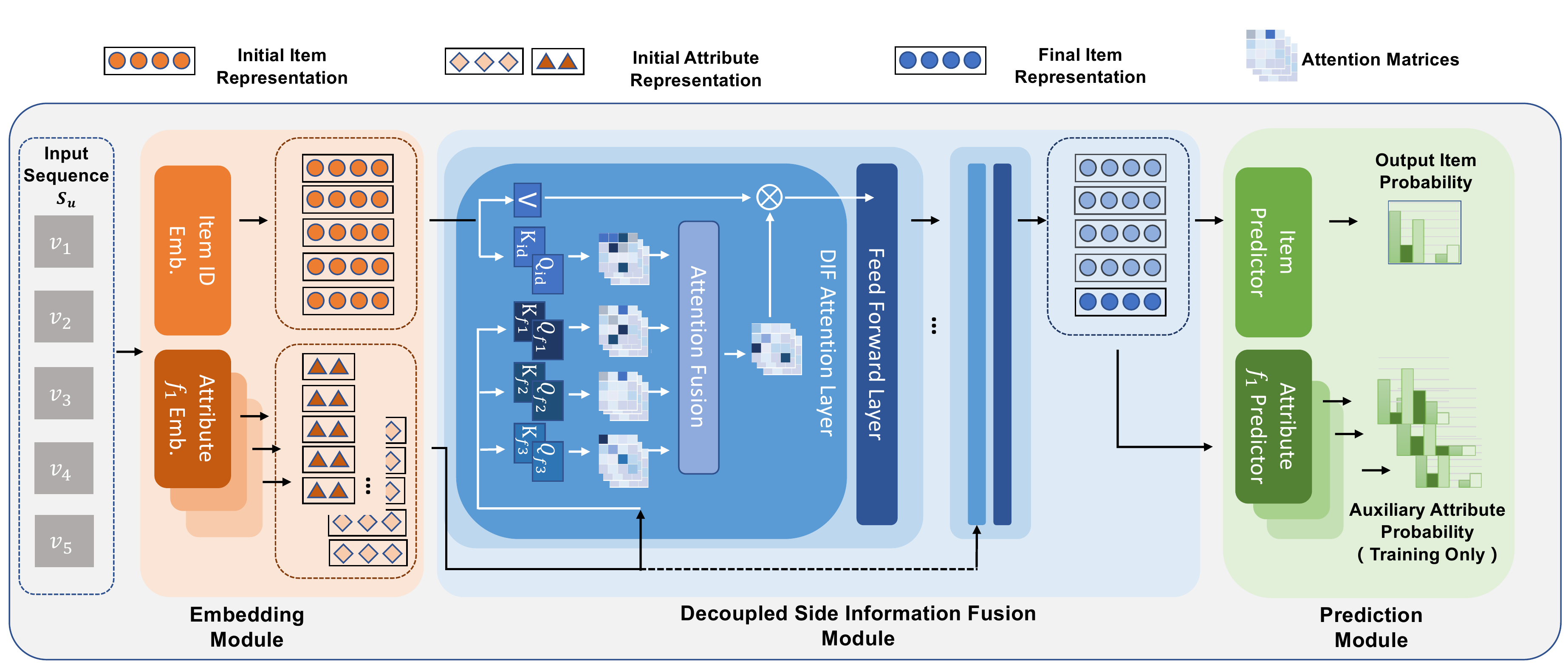}
  \caption{Overview of the proposed network.}
\label{fig:overview}
\end{figure*}

\section{Problem Formulation}
\label{problem}
In this section, we specify the research problem of side information integrated sequential recommendation.
Let $\mathcal{I}$, $\mathcal{U}$ %and $\mathcal{C}=\{c_1, ..., c_{|\gC|}\}$ 
denote the sets of items and users respectively. 
For a user $u \in \mathcal{U}$, the user's historical interactions can be represented as:
$\mathbb{S}_u=[v_1, v_2, \dots, v_n]$,
where the term $v_i$ represents the $i$th interaction in the chronologically ordered sequence. 
Side information can be the attributes of users, items, and actions that provide extra information for prediction. 

Following the definition in the prior work~\cite{liu2021non}, side information include item-related information (e.g., brand, category) and behavior-related information (e.g., position, rating). 
% Note that position is also a special type of behavior-related information. 
Suppose we have $p$ types of side information.
Then, for the side information integrated sequential recommendation, each interaction can be represented as:
$
    v_i=(I_i, f_i^{(1)}, \dots, f_i^{(p)}),
$
where $f_i^{(j)}$ represents $j$th type of the side information of the $i$th interaction in the sequence, and  $I_i$ represents the item ID of the $i$th interaction.
Given this sequence $\mathbb{S}_u$, our goal is to predict the item $I_{pred} \in \mathcal{I}$ the user $u$ will interact with highest possibility:
$I_{pred}=I^{(\hat{k})}$, where
$\hat{k} = {arg\,max}_{k} P(v_{n+1}=(I^{(k)},\cdot) | \mathbb{S}_u).
$

\section{Methodology}
In this section, we present our DIF-SR to effectively and flexibly fuse side information to help next-item prediction.
% Our work is based on SASRec~\cite{kang2018self}, which leverages the multi-head self-attention mechanism~\cite{vaswani2017attention} for sequential recommendation.
The overall architecture of DIF-SR is illustrated in Fig.~\ref{fig:overview}, consisting of three main
modules: \textbf{Embedding Module} (Sec.~\ref{method:emb}), \textbf{Decoupled Side Information Fusion Module} (Sec.~\ref{method:DIF}), and \textbf{Prediction Module with AAP} (Sec.~\ref{method:predictor}).
% \begin{itemize}
%     \item Embedding Module (Sec.~\ref{method:emb}): we first encode the input sequence into flexible dimensional vectors, which encode item id and various side information through embedding layers.
%     \item Decoupled Side Information Fusion Module (Sec.~\ref{method:DIF}): we propose DIF attention to allow side information to directly guide the self-attention process of item representation learning and enhance attention representation power with higher rank, avoidance of mixed correlation, and gradient flexibility.
%     \item Prediction Module with AAP (Sec.~\ref{method:predictor}): We generate the final prediction with the item predictor during inference. During training, we first propose to leverage AAP directly upon the learned item representation to activate the interaction between side information and item.

% \end{itemize}

\begin{figure*}[h]
    \centering
    % \subfigure[FDSA.]{
    % \label{fig:FDSA}
    % \includegraphics[width=0.233\linewidth]{images/fdsa.pdf}
    % }
    \subfigure[SASRec$_\text{F}$.]{
    \label{fig:sasrecf}
    \includegraphics[width=0.32\linewidth]{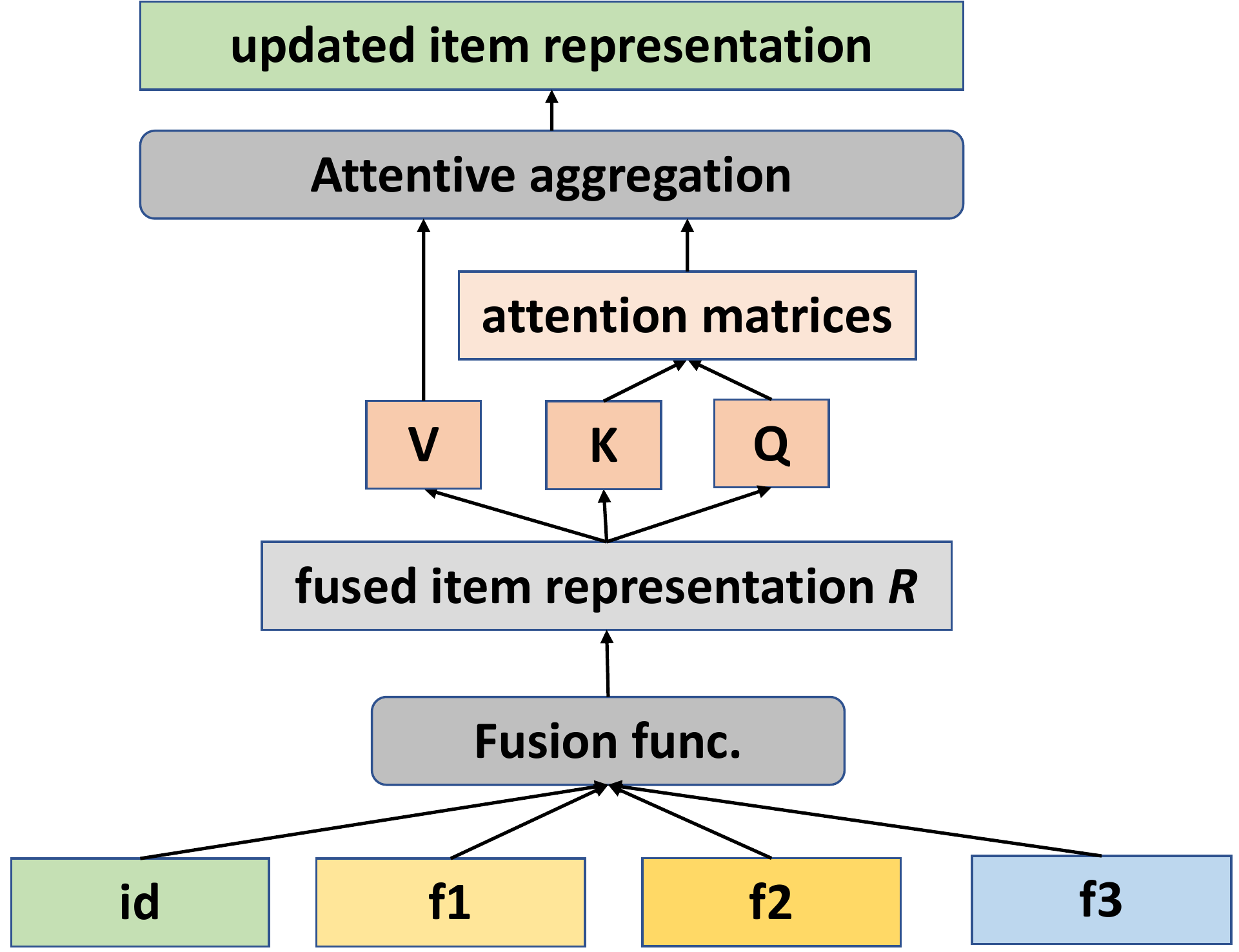}
    }
    \subfigure[NOVA-SR.]{
    \label{fig:nova}
    \includegraphics[width=0.32\linewidth]{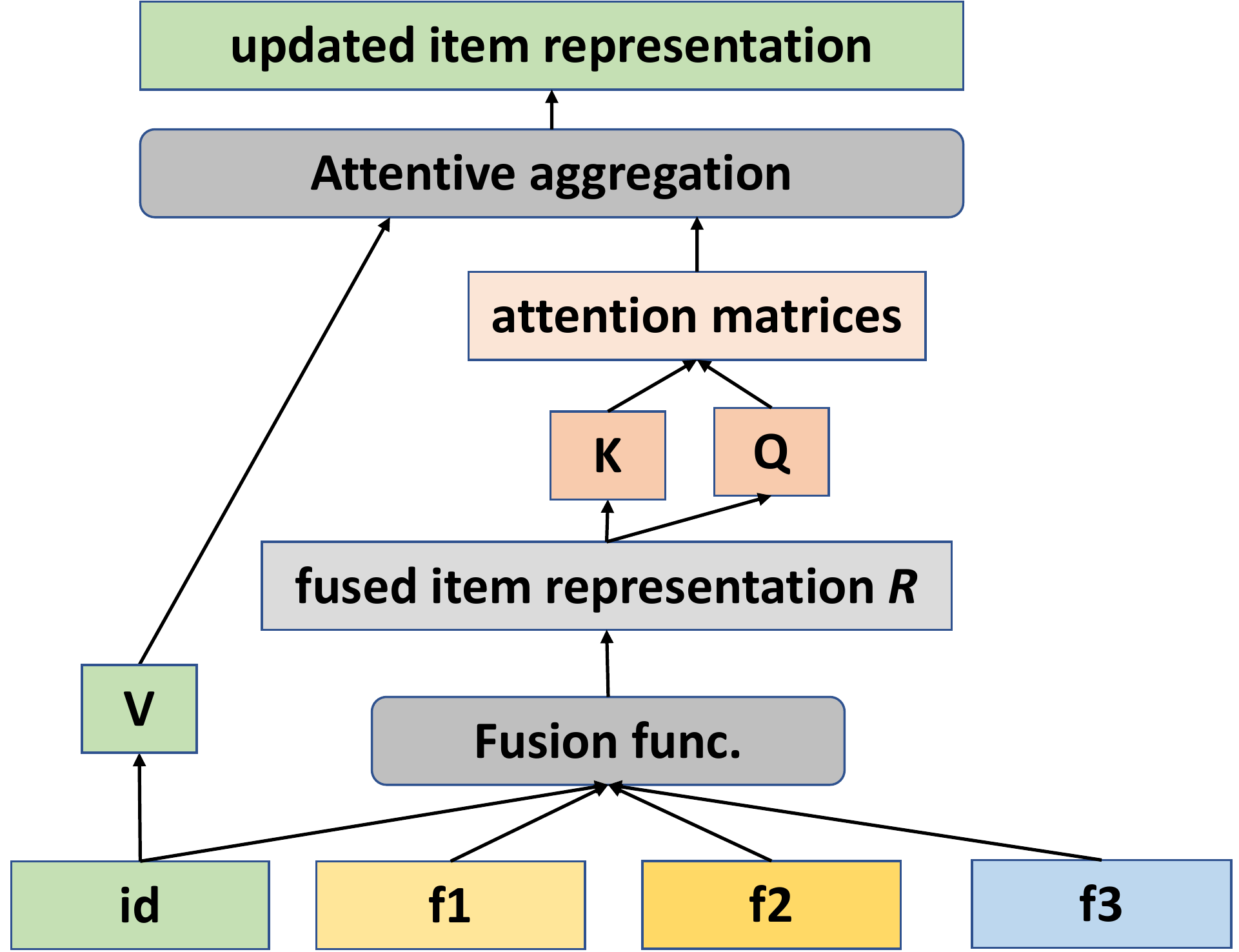}
    }
    \subfigure[DIF-SR.]{
    \label{fig:dif}
    \includegraphics[width=0.32\linewidth]{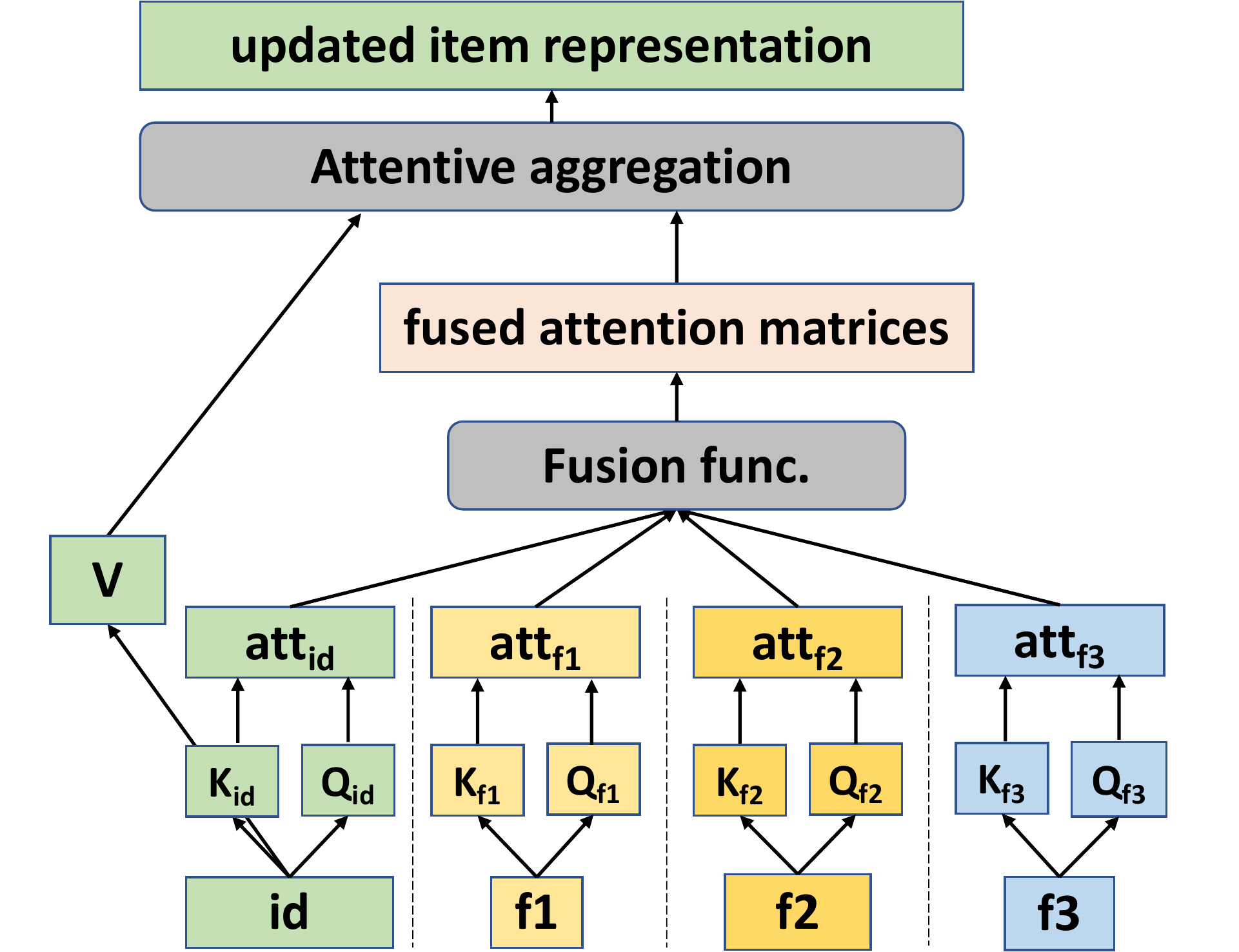}
    }
    \caption{The comparison of item representation learning process of various solutions. (a)  SASRec$_\text{F}$: SASRec$_\text{F}$ fuses side information into item representation and uses the fused item representation to calculate key, query and value.  (b) NOVA-SR: NOVA-SR uses the fused item representation for the calculation of key and query, and keeps value non-invasive. (c) DIF-SR: Instead of early fusion to get fused item representation, the proposed DIF-SR decouples the attention calculation process of various side information to generate fused attention matrices for higher representation power, avoidance of mixed correlation, and flexible training gradient.}
    \label{fig:comp}
\end{figure*}

\subsection{Embedding Module}
\label{method:emb}
In the embedding module, the input sequence $\mathbb{S}_u=[v_1, v_2, \dots, v_n]$ is fed into the item embedding layer and various attribute embedding layers to get the item embedding $E^{ID}$ and side information embeddings $E^{f1}, \dots, E^{fp}$:
\begin{equation}
\begin{aligned}
    E^{ID}=&\mathcal{E}_{id}([I_1,I_2,\dots, I_n]),
\\
    E^{f1}=& \mathcal{E}_{f1}([f_1^{(1)},f_2^{(1)}, \dots,f_n^{(1)}]),
\\
    &\cdots \\
    E^{fp}=& \mathcal{E}_{fp}([f_1^{(p)},f_2^{(p)}, \dots,f_n^{(p)}]),
\end{aligned}
\end{equation}
where $\mathcal{E_\cdot}$ represents the corresponding embedding layer that encodes the item and different item attributes into vectors. The look-up embedding matrices can be represented as $M_{id}\in\mathbb{R}^{|\mathcal{I}|\times d}, M_{f1}\in\mathbb{R}^{|f1|\times d_{f1}} ,\dots,M_{fp}\in\mathbb{R}^{|fp|\times d_{fp}}$, and $|\cdot|$ represents the corresponding total number of different items and various side information, and $d$ and $d_{f1},\dots,d_{fp}$ denote the dimension of embedding of item and various side information. Note that supported by the operations in proposed DIF module, the embedding dimensions are flexible for different types of attributes. It is further verified in Sec.~\ref{exp:hyper} that we can apply much smaller dimension for attribute than item to greatly improve efficiency of the network without harming the performance. Then the embedding module gets the output embeddings $E^{ID}\in\mathbb{R}^{n\times d}, E^{f1}\in\mathbb{R}^{n\times d_{f1}}, \dots, E^{fp}\in\mathbb{R}^{n\times d_{fp}}$.

\subsection{Decoupled Side Information Fusion Module}
\label{method:DIF}
We first specify the overall layer structure of the module in Sec.~\ref{method:DIF-layer}.
To better illustrate our proposed DIF attention, we discuss self-attentive learning process of the prior solutions~\cite{liu2021non,kang2018self} in Sec.~\ref{method:DIF-prior}.
Afterward, we comprehensively introduce the proposed DIF attention in Sec.~\ref{method:DIF-ours}.
Finally, we provide theoretical analysis on the enhancement of DIF on expressiveness of the model regarding the rank of attention matrices and flexibility of gradients in Sec.~\ref{method:DIF-theory}.

%  towards side information fusion with self-attention for SR 
\subsubsection{Layer Structure.}
\label{method:DIF-layer}
As shown in Fig.~\ref{fig:overview}, the Decoupled Side Information Fusion Module contains several stacked blocks of sequential combined DIF Attention Layer and Feed Forward Layer. The block structure is the same as SASRec~\cite{kang2018self}, except that we replace the original multi-head self-attention with the multi-head DIF attention mechanism.
% Similar to the work~\cite{liu2021non}, 
Each DIF block involves two types of input, namely current item representation and auxiliary side information embeddings, and then output updated item representation. Note that the auxiliary side information embeddings are not updated per layer to save computation while avoiding overfitting. Let $R_{i}^{(ID)}\in\mathbb{R}^{n\times d} $ denote the input item representation of block $i$. 
% and $R^{(fea)}$ represent the side information embedding that fed into all the stacked blocks.
The process can be formulated as:
\begin{equation}
\begin{aligned}
   R_{i+1}^{(ID)} =\textrm{LN}(\textrm{FFN}(\textrm{DIF}(R_{i}^{(ID)},E^{f1},\dots, E^{fp}))),
\end{aligned}
\end{equation}
\begin{equation}
\begin{aligned}
R_1^{(ID)}=&E^{ID},
% {u,1}^{(ID)}, E_{u,2}^{(ID)}, \dots, E_{u,n}^{(ID)}),
% \\
% R^{(fea)}=(&E_{u,1}^{(fea)}, E_{u,2}^{(fea)}, \dots, E_{u,n}^{(fea)}),
\end{aligned}
\end{equation}
where $\textrm{FFN}$ represents a fully connected feed-forward network and $\textrm{LN}$ denotes layer normalization.

\subsubsection{Prior Attention Solutions.}
\label{method:DIF-prior}
Fig.~\ref{fig:comp} shows the comparison of the prior solutions to fuse side information to the updating process of item representation. Here we focus on the self-attention calculation, which is the main difference of several solutions. 

\textbf{SASRec$_\text{F}$}: 
As shown in Fig.~\ref{fig:sasrecf}, the solution directly fuses side information embedding to the item representation and performs vanilla self-attention upon the integrated embedding, which is extended from original SASRec~\cite{kang2018self}. 

With the input length $n$, hidden size $d$, multi-head query-key down-projection size $d_h$, we can let $R\in\mathbb{R}^{n\times d}$ denote the integrated embedding, and $W_Q^i,W_K^i,W_V^i\in\mathbb{R}^{d\times d_h}, i \in [h]$ denote the query, key, and value projection matrices for $h$ heads ($d_h=d/h$), then the calculation of attention score can be formalized as:
% The solution that directly fuse side information embedding to the item representation and perform vanilla self-attention upon the integrated embedding, which extended from original SASRec~\cite{kang2018self}. As show in Fig.~\ref{fig:sasrecf}, the solution first fuse all the embeddings to an integrated embedding and directly apply vanilla self attention. 

% With the input length $n$, hidden size $d$, multi-head query-key down-projection size $d_h$, we can let $R\in\mathbb{R}^{n\times d}$ denote the integrated embedding, and $W_Q^i,W_K^i,W_V^i\in\mathbb{R}^{d\times d_h}, i \in [h]$ denote the query,key, and value projection matrices for $h$ heads ($d_h=d/h$), then the calculation of attention score can be formalized as:
\begin{equation}
\begin{aligned}
\text{SAS\_att}^i = (RW_Q^i)(RW_K^i)^\top.
\end{aligned}
\end{equation}
% Then 
% %\textrm{SASRecF}_{att}(R)=\sigma(\frac{QK^T}{\sqrt{d_k}})V,
% where $Q,K,V\in\mathbb{R}^{d\times d_h}$ are linearly transformed from the integrated embedding $R$:
% \begin{equation}
% \begin{aligned}
% Q=RW_Q,K=RW_K,V=RW_V.
% \end{aligned}
% \end{equation}
% $Q$, $K$, $V$ are calculated by linear transform:
% $$
% Q=RW_Q,K=RW_K,V=RW_V
% $$
Then the output for each head can be represented as:
\begin{equation}
\begin{aligned}
\text{SAS\_head}^i = \sigma(\frac{\text{SAS\_att}^i}{\sqrt{d}})(RW_V^i),
\end{aligned}
\end{equation}
where $\sigma$ denotes Softmax function.

Despite that this solution allows side information to directly influence the item representation learning process, it is observed that such method has the drawback of invasion of item representation~\cite{liu2021non}.

\textbf{NOVA}: To solve the mentioned problem, the work~\cite{liu2021non} proposes to utilize the non-invasive fusion of side information. As shown in Fig.~\ref{fig:nova}, NOVA calculates $Q$ and $K$ from the integrated embeddings $R\in\mathbb{R}^{n\times d}$, and $V$ from the pure item ID embeddings $R^{(ID)}\in\mathbb{R}^{n\times d}$. Similarly, $W_Q^i,W_K^i,W_V^i\in\mathbb{R}^{d\times d_h}, i \in [h]$ denote the query,key, and value projection matrices:
\begin{equation}
\begin{aligned}
\text{NOVA\_att}^i = (RW_Q^i)(RW_K^i)^\top
\end{aligned}
\end{equation}
% where $Q,K,V\in\mathbb{R}^{d\times d_h}$ are calculated by linear transform from $R^{(ID)}$ and $R$ respectively, which can be represented as:
% \begin{equation}
% \begin{aligned}
% Q=RW_Q,K=RW_K,V=R^{(ID)}W_V.
% \end{aligned}
% \end{equation}
Then the output for each head can be represented as:
\begin{equation}
\begin{aligned}
\text{NOVA\_head}^i = \sigma(\frac{\text{NOVA\_att}^i}{\sqrt{d}})(R^{(ID)}W_V^i)
\end{aligned}
\end{equation}

\subsubsection{DIF Attention.}
\label{method:DIF-ours}
We argue that although NOVA solves the invasion problem of value, using integrated embedding for the calculation of key and value still suffers from compound attention space as well as the degradation of expressiveness regarding the rank of attention matrix and training gradient flexibility. The supportive theoretical analysis is shown in Sec.~\ref{method:DIF-theory}.

Therefore, in contrast to the prior studies that inject attribute embedding into item representation for a mixed representation, we propose to leverage decoupled side information fusion solution. As shown in Fig.~\ref{fig:dif}, in our proposed solution, all attributes attend upon themselves to generate the decoupled attention matrices, which are then fused to the final attention matrices. The decoupled attention calculation improves the model's expressiveness through breaking the \textit{rank bottleneck} of attention matrices bounded by head projection size $d_h$. It also avoids inflexible gradients and uncertain cross relationships among different attributes and the item to enable a reasonable and stable self-attention.

% Also, it allows higher-rank attention modeling capacity with the same dimensional attention matrices and gradient and dimension flexibility for different attributes. 

Given the input length $n$, item hidden size $d$, multi-head query-key down-projection size $d_h$, we have $W_Q^i,W_K^i,W_V^i\in\mathbb{R}^{d\times d_h}, i \in [h]$ denote the query, key, and value projection matrices for $h$ heads ($d_h=d/h$) for item representation $R^{(ID)}\in\mathbb{R}^{n\times d}$. Then attention score for item representation is then calculated with:
\begin{equation}
\begin{aligned}
    \text{att}_{ID}^i=(R^{(ID)}W_Q^i)(R^{(ID)}W_K^i)^\top.
\end{aligned}
\end{equation}

Different from prior works, we also generate multi-head attention matrices for every attribute with attribute embeddings $E^{f1}\in\mathbb{R}^{n\times d_{f1}},\dots,E^{fp}\in\mathbb{R}^{n\times d_{fp}}$. Note that we have $d_{fj} \le d, j \in [p]$ to avoid over-parameterization and reduce computation overhead. Then we have corresponding
$W_Q^{(fj)i},W_K^{(fj)i},W_V^{(fj)i}\in\mathbb{R}^{d\times d_{hj}}, i \in [h], j \in[p]$ denote the query,key, and value projection matrices for $h$ heads ($d_{hj}=d_{fj}/h$):
\begin{equation}
\begin{aligned}
    \text{att}_{f1}^i =&(E^{f1}W_Q^{(f1)i})(E^{f1}W_K^{(f1)i})^\top,  \\
    &\dots, \\
    \text{att}_{fp}^i =&(E^{fp}W_Q^{(fp)i})(E^{fp}W_K^{(fp)i})^\top.
\end{aligned}
\end{equation}

% \begin{equation}
% \begin{aligned}

% . Let $att_{id}, att_{f1}, \dots,att_{fp}$ denote the attention matrices for item and side information. Then we have,

% \begin{equation}
% \begin{aligned}
%     \text{DIF\_att}att_{f1} =&\sigma(\frac{Q_{f1}K_{f1}^T}{\sqrt{d_k}}) ,
%     \dots,
%     \text{DIF\_att}_{fp}=&\sigma(\frac{Q_{fp}K_{fp}^T}{\sqrt{d_k}}), \\
% \end{aligned}
% \end{equation}

% \begin{equation}
% \begin{aligned}
%     \text{DIF\_att}_{ID}^i =(RW_Q^i)(RW_k^i)^\top &\sigma(\frac{Q_{id}K_{id}^T}{\sqrt{d_k}}), \\
%     \text{DIF\_att}att_{f1} =&\sigma(\frac{Q_{f1}K_{f1}^T}{\sqrt{d_k}}) ,
%     \dots,
%     \text{DIF\_att}_{fp}=&\sigma(\frac{Q_{fp}K_{fp}^T}{\sqrt{d_k}}), \\
% \end{aligned}
% \end{equation}
% where $Q_{id}$ and $K_{id}$ are linearly transformed from the current item representation $R_{ID}$, while $K_{f\cdot}$,$Q_{f\cdot}$ and $K_{b\cdot}$,$Q_{b\cdot}$ are transformed from the corresponding element of the side information embedding $R^{(fea)}$.
Then our DIF attention fuses all the attention matrices with the fusion function $\mathcal{F}$ explored in the prior work~\cite{liu2021non}, including addition, concatenation, and gating, and gets the output for each head as:
% \begin{equation}
% \begin{aligned}
%     \text{DIF\_att}^i(R^{(ID)}, E^{f1}, \dots, E^{fp}) =(R^{(ID)}W_Q^i)(R^{(ID)}W_k^i)^\top.
% \end{aligned}
% \end{equation}

% Then the while DIF attention can be represented as 
\begin{equation}
\label{equ:fusion}
\begin{aligned}
\textrm{DIF\_att}^i=\mathcal{F}(\text{att}_{ID}^i, \text{att}_{f1}^i ,\dots, \text{att}_{fp}^i),\\
\text{DIF\_head}^i = \sigma(\frac{\text{DIF\_att}^i}{\sqrt{d}})(R^{(ID)}W_V^i).
\end{aligned}
\end{equation}

Finally, the outputs of all attention heads are concatenated and fed to the feed-forward layer.      

\subsubsection{Theoretical Analysis}
\label{method:DIF-theory}
In this section, we extend the analysis for positional embedding in ~\cite{chen2021simple} to theoretically analyze the proposed DIF and early fusion solution in the previous models~\cite{kang2018self,liu2021non}. See Appendix~\ref{appendix:proofs}  for the proof.

We first discuss the \textit{expressiveness} of the model regarding the rank of attention matrices for DIF and the prior solutions.

\begin{theorem} \label{thm:rank}
Let $W_Q, W_K \in\mathbb{R}^{d\times d_h}$ denote the projection matrices with head projection size $d_h$, $W_Q^{fj}, W_K^{fj}$ denote projection matrices for attribute $j$ with head projection size $d_{hj}$, and $d_{hj} \le d_h$, $d$ and $n \ge d_h + \Sigma_{j=1}^{p} d_{hj}$.
Let $\text{att} = R W_Q W_K^\top R^\top$ be the attention matrices for integrated representation $R\in\mathbb{R}^{n\times d}$. 
Let $DIF\_{att} = R^{(ID)} W_Q W_K^\top R^{(ID)\top} + \Sigma_{j=1}^{p}(E^{fj}W_Q^{fj}W_K^{fj\top}E^{fj\top})$ be the attention matrices for decoupled representation $R^{(ID)}\in\mathbb{R}^{n\times d}$, $E^{f1}\in\mathbb{R}^{n\times d_{f1}},\dots,E^{fp}\in\mathbb{R}^{n\times d_{fp}}$. Then, for any $R, W_Q, W_K$ $$rank(att) \leq d_h.$$
There exists a choice of parameters such that
 $$rank(DIF\_{att}) = d_h + \Sigma_{j=1}^{p}d_{hj} > d_h.$$

\end{theorem}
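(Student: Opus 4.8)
The plan is to establish the two claims of Theorem~\ref{thm:rank} separately, as they are structurally different: the first is a universal upper bound, while the second is an existence statement requiring an explicit construction.

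For the upper bound $\text{rank}(att) \leq d_h$, the key observation is that $att = R W_Q W_K^\top R^\top$ factors through the low-dimensional head projection. First I would note that $W_Q W_K^\top \in \mathbb{R}^{d \times d}$ has rank at most $d_h$, since $W_Q, W_K \in \mathbb{R}^{d \times d_h}$ and the product of two matrices cannot exceed the rank of either factor (and $d_h \leq d$). Then since $att = (R W_Q)(R W_K)^\top$ where $R W_Q, R W_K \in \mathbb{R}^{n \times d_h}$, the rank of $att$ is bounded by the inner dimension $d_h$. This holds for \emph{any} choice of $R, W_Q, W_K$, which gives the first inequality immediately. This step is routine.

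For the existence claim, the goal is to exhibit parameters making $\text{rank}(DIF\_att) = d_h + \Sigma_{j=1}^p d_{hj}$. The structure of $DIF\_att$ is a sum of $p+1$ terms, the $k$-th of which has rank at most $d_h$ (for the item term) or $d_{hj}$ (for attribute $j$). Since rank is subadditive, $\text{rank}(DIF\_att) \leq d_h + \Sigma_{j=1}^p d_{hj}$ always, so the content is showing this bound is \emph{achievable}. The plan is to choose the embeddings $R^{(ID)}, E^{f1}, \dots, E^{fp}$ and projection matrices so that the column spaces of the individual summands are mutually orthogonal (or at least linearly independent), which forces the rank of the sum to equal the sum of the individual ranks. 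Concretely, I would pick each $R^{(ID)} W_Q$, $E^{fj} W_Q^{fj}$ to have full column rank and arrange their respective column spaces to lie in complementary coordinate subspaces of $\mathbb{R}^n$ — this is where the hypothesis $n \geq d_h + \Sigma_{j=1}^p d_{hj}$ is essential, since we need enough ambient dimension to fit $d_h + \Sigma_{j=1}^p d_{hj}$ linearly independent directions. For instance, one can select embeddings and projections so that each summand $E^{fj} W_Q^{fj} W_K^{fj\top} E^{fj\top}$ is supported on a disjoint block of rows/columns, making $DIF\_att$ block-diagonal with blocks of ranks $d_h, d_{h1}, \dots, d_{hp}$; the rank of a block-diagonal matrix is the sum of the block ranks.

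The main obstacle I anticipate is the verification that the individual summands can simultaneously be made full-rank \emph{and} have independent column spaces under the given dimensional constraints — in particular ensuring that each term $E^{fj} W_Q^{fj} W_K^{fj\top} E^{fj\top}$ actually attains rank exactly $d_{hj}$ rather than something smaller, which requires that $E^{fj}$ has rank at least $d_{hj}$ (guaranteed since $n \geq d_{hj}$ and we choose $E^{fj}$ freely) and that $W_Q^{fj}, W_K^{fj}$ are chosen in general position. The cleanest route is the explicit block construction: assign disjoint index sets to the item term and each attribute term so that orthogonality is built in by construction, reducing the independence verification to the trivial observation that distinct coordinate blocks are orthogonal. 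The remaining check — that generic choices of the projection matrices achieve the maximal per-block rank — is a standard genericity argument and should not present serious difficulty.
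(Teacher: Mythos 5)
Your proposal is correct and follows essentially the same route as the paper: the upper bound via the fact that $att$ factors through the $n\times d_h$ matrices $RW_Q$ and $RW_K$, and the existence claim via an explicit choice of embeddings and projections that places each summand of $DIF\_att$ on a disjoint diagonal block, so the ranks add. The paper's construction is just a concrete instance of your block-diagonal plan (identity blocks in disjoint coordinate ranges), so no genericity argument is even needed; your added remark that subadditivity makes $d_h+\Sigma_{j}d_{hj}$ the best achievable rank is a small but accurate refinement not spelled out in the paper.
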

% \noindent 
\paragraph{\textbf{Remarks.}} This theorem shows that early fusion of side information limits the rank of attention matrices by $d_h$, which is usually a smaller value than the attention matrices could reach in multi-head self-attention. And our solution breaks such rank bottleneck through fusing the decoupled attention scores for items and attributes. Fig.~\ref{fig:rank} also offers an experimental illustration of rank comparison. Higher-rank attention matrices inherently 
enhance DIF-SR with stronger model expressiveness.
% allows for higher rank attention. See ~\ref{appendix:proofs} for the proof.

Then, we discuss the \textit{training flexibility} of integrated embedding based solution. We claim that with simple addition fusion solution, both SASRec$_\text{F}$ and NOVA limit the flexibility of gradients. Let $E^{ID}, E^{f1},\dots, E^{fp}\in\mathbb{R}^{n\times d}$ denote the input item and attribute embeddings. 
For SASRec$_\text{F}$, all the embeddings are added up and fed into the model $G$. For given input and label $y$, the loss function can be denoted as:
\begin{equation}
\label{eqsas}
L = \ell(G(E^{ID} + \Sigma_{i=1}^{p}E^{fi}), y).
\end{equation}
For NOVA, the attribute embeddings are first added up as inseparable and then fused to item representation in every layer, thus for model $G$ and label $y$, the loss function can be denoted as:
\begin{equation}
\label{eqnova}
L = \ell(G(E^{ID}, \Sigma_{i=1}^{p}E^{fi}), y).
\end{equation}

\begin{theorem}\label{thm:gradients}
Let $E^{ID}, E^{f1},\dots, E^{fp}\in\mathbb{R}^{n\times d}$ denote the input item and attribute embeddings. For the loss function in Equ.\eqref{eqsas}, the gradients for $E^{ID}, E^{f1},\dots, E^{fp}$  remain same for any function $\ell$ and model $G$, and any input and label $y$.  For the loss function in Equ.\eqref{eqnova}, the gradients for $E^{f1},\dots, E^{fp}$  remain same for any function $\ell$ and model $G$, and any input and label $y$.
\end{theorem}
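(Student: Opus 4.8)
The plan is to treat both statements as direct consequences of the multivariate chain rule, exploiting the fact that the summed embeddings influence the loss only through a single additive intermediate quantity. First I would record the auxiliary observation that drives both parts: if a scalar $L$ factors as $L=\phi(S)$ where $S=\sum_{k}X_k$ is a sum of matrices $X_k\in\mathbb{R}^{n\times d}$ of common shape, then $\partial L/\partial X_k=\partial\phi/\partial S$ for every index $k$, with no dependence on $k$. This is checked entrywise: since $S_{ab}=\sum_{k'}(X_{k'})_{ab}$ we have $\partial S_{ab}/\partial (X_k)_{cd}=\delta_{ac}\delta_{bd}$, so the chain rule gives $(\partial L/\partial X_k)_{cd}=\sum_{a,b}(\partial\phi/\partial S)_{ab}\,\partial S_{ab}/\partial (X_k)_{cd}=(\partial\phi/\partial S)_{cd}$, an expression that carries no trace of $k$.

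For the SASRec$_\text{F}$ loss of Equation~\eqref{eqsas}, I would set $S=E^{ID}+\sum_{i=1}^{p}E^{fi}$ and $\phi(S)=\ell(G(S),y)$, so that $L=\phi(S)$ with $E^{ID}$ and each $E^{fi}$ all appearing as summands of $S$. Applying the observation with the collection $\{E^{ID},E^{f1},\dots,E^{fp}\}$ in the role of the $X_k$ yields $\partial L/\partial E^{ID}=\partial L/\partial E^{f1}=\cdots=\partial L/\partial E^{fp}=\partial\phi/\partial S$, which is exactly the asserted coincidence of all gradients. Note that nothing here uses the internal structure of $\ell$ or $G$: they enter only through the single common factor $\partial\phi/\partial S$.

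For the NOVA loss of Equation~\eqref{eqnova}, the item embedding $E^{ID}$ now enters as a separate argument of $G$ rather than as a summand, so I would isolate the sum $T=\sum_{i=1}^{p}E^{fi}$ and define $\psi(T)=\ell(G(E^{ID},T),y)$ with $E^{ID}$ held fixed. Then $L=\psi(T)$ depends on $E^{f1},\dots,E^{fp}$ only through $T$, and the same observation, now applied to the attribute embeddings alone, gives $\partial L/\partial E^{fi}=\partial\psi/\partial T$ for every $i$, so the attribute gradients all agree. The argument deliberately says nothing about $\partial L/\partial E^{ID}$, matching the theorem's weaker claim in the NOVA case: because $E^{ID}$ is routed through a distinct argument slot, its gradient generally differs.

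There is no genuine analytic obstacle here; the only point demanding care is bookkeeping. I would make the matrix-gradient convention explicit (so that $\partial L/\partial E$ is the same-shape matrix of partials $\partial L/\partial E_{ab}$) and emphasize that the conclusion is purely a statement about how additive fusion routes gradients. It holds for \emph{arbitrary} differentiable $\ell$ and $G$ precisely because the shared intermediate $S$ (respectively $T$) absorbs all model- and loss-specific dependence into one common factor, leaving the summand index invisible to the gradient.
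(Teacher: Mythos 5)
Your proposal is correct and follows essentially the same route as the paper's proof: both apply the chain rule to observe that each summand of the additive fusion contributes an identity Jacobian, so every gradient collapses to the common factor $\nabla_G L\cdot\nabla_S G$ (respectively $\nabla_G L\cdot\nabla_T G$), independent of the summand index. Your entrywise verification of the auxiliary observation is slightly more explicit than the paper's, but the substance is identical.
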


\noindent \paragraph{\textbf{Remarks.}} This theorem shows that with simple addition fusion, the gradients are the same for the input item embeddings and attribute embeddings for SASRec, and all types of attribute embeddings share the same gradient for NOVA. This suggests that to enable flexible gradients, more complex and heavy fusion solutions need to be involved for early-integration based methods compared with ours.

% This suggests that early integration forces the model to include complex and heavy early embedding fusion solutions to enable flexible training gradients. 
% for the item and attributes. 

% While in standard NLP tasks the inputs $X$ can be different in each step due to different input tokens being present in each mini batch, the result still suggests that additive position embedding can limit the model from learning the relative importance of position encodings with respect to token embeddings based on the training task at hand.  

\subsection{Prediction Module with AAP}
\label{method:predictor}
With the final representation $R_L^{(ID)}$ that encodes the sequence information with the help of side information, we take the last element of $R_L^{(ID)}$, denoted as $R_L^{(ID)}[n]$, to
estimate the probability of user $u$ to interact with each item in the item vocabulary.
The item prediction layer can be formulated as:
\begin{equation}
    \hat{y} = \text{softmax}(M_{id} R_L^{(ID)}[n]^\top),
\end{equation}
where $\hat{y}$ represents the $|\mathcal{I}|$-dimensional probability, and  $M_{id}\in\mathbb{R}^{|\mathcal{I}|\times d}$ is the item embedding table in the embedding layer.
% $\sigma$ is the Softmax function.

During training, we propose to use Auxiliary Attribute Predictors (\textbf{AAP}) for attributes (except position information) to further activate the interaction between auxiliary side information and item representation.
Note that different from prior solutions that make predictions with separate attribute embedding~\cite{yuan2021icai} or only use attributes for pretraining~\cite{zhou2020s3},  we propose to apply multiple predictors directly on final representation to force item representation to involve useful side information.
% to better enhance the item representation. 
As verified in Sec.~\ref{exp:components}, AAP can help further improve the performance, especially when combining with DIF. We attribute this to the fact that AAP is designed to enhance the attributes' informative influence on self-attentive item representation learning, while early-integration based solutions do not support such influence.

The prediction for attribute $j$ can be represented as:
\begin{equation}
\begin{aligned}
    \hat{y}^{(fj)} =& \sigma(W_{fj} R_L^{(ID)}[n]^\top + b_{fj}),
\end{aligned}
\end{equation}
where $\hat{y}^{(fj)}$ represents the $|fj|$-dimensional probability, $W_{fj}\in\mathbb{R}^{|fj|\times d_{fj}}$ and $b_{fj}\in\mathbb{R}^{|fj|\times 1}$ are learnable parameters,
and $\sigma$ is the sigmoid function.
% \widehat{y}_{f1} =& \sigma(W_{f1} R_L^{(f1)}[n] + b_{f1}),\\
    % \dots,\\
% To measuring the accuracy between the prediction $\hat y$ and ground truth $y$. 
% Here we compute loss using cross entropy and binary cross entropy to measure the difference between prediction $\hat y,\hat y^{(fj)}$ and ground truth $y,y^{(fj)}$, $j \in [p]$ respectively:
% % We train with the combination of cross entropy loss for item and attributes, which can be formulated as:
% \begin{equation}
% \begin{aligned}
% 	L_{id}=&-\sum_{i=1}^{|\mathcal{I}|}{y_i \log (\hat y_i)},
% % 	+(1-y_i)\log (1-\hat y_i)}, 
% 	\\
% 	L_{fj}=&-\sum_{i=1}^{|fj|}{y^{(fj)}_i \log (\hat y^{fj}_i)+(1-y^{(fj)}_i)\log (1-\hat y^{(fj)}_i)}, \\
% \end{aligned}
% \end{equation}
Here we compute the item loss $L_{id}$ using cross entropy to measure the difference between prediction $\hat y$ and ground truth $y$:
\begin{equation}
\begin{aligned}
	L_{id}=&-\sum_{i=1}^{|\mathcal{I}|}{y_i \log (\hat y_i)}
\end{aligned}
\end{equation}
And following \cite{adsr}, we compute the side information loss $L_{fj}$ for $j$th type using binary cross entropy to support multi-labeled attributes:
\begin{equation}
\begin{aligned}
	L_{fj}=&-\sum_{i=1}^{|fj|}{y^{(fj)}_i \log (\hat y^{fj}_i)+(1-y^{(fj)}_i)\log (1-\hat y^{(fj)}_i)}, \\
\end{aligned}
\end{equation}
Then, the combined loss function with balance parameter $\lambda$ can be formulated as:
\begin{equation}
\label{eq:all_loss}
\begin{aligned}
	L =& L_{id} + \lambda \Sigma_{j=1}^{p}L_{fj}.
\end{aligned}
\end{equation}

\section{Experiments}
We conduct extensive experiments on four real-world and widely-used datasets to answer the following research questions.
\begin{itemize}
    \item \textbf{RQ1:} Does DIF-SR outperform the current state-of-the-art basic SR methods and side information integrated SR methods?
    \item \textbf{RQ2:} Can the proposed DIF and AAP be readily incorporated into the state-of-the-art self-attention based models and boost the performance?
    \item \textbf{RQ3:} What is the effect of different components and hyperparameters in the DIF-SR framework?
    \item \textbf{RQ4:} Does the visualization of attention matrices fusion of DIF provide evidence for superior performance?
\end{itemize}
\subsection{Experimental Settings}
\subsubsection{Dataset.}
The experiments are conducted on four real-world and widely used datasets. 
\begin{itemize}
    \item \textbf{Amazon Beauty}, \textbf{Sports} and \textbf{Toys}\footnote{http://jmcauley.ucsd.edu/data/amazon/}. These datasets are constructed from Amazon review datasets~\cite{DBLP:conf/sigir/McAuleyTSH15}. 
    Following baseline \cite{yuan2021icai}, we utilize fine-grained categories of the goods and position information as attributes for all these three datasets. 
    \item \textbf{Yelp}~\footnote{https://www.yelp.com/dataset}, which is a well-known business recommendation dataset. Following ~\cite{zhou2020s3}, we only retain the transaction records after Jan. 1st, 2019 in our experiment, and the categories of businesses and position information are regarded as attributes. 
\end{itemize}

Following the same pre-processing steps used in ~\cite{kang2018self,zhou2020s3,yuan2021icai}, we remove all items and users that occur less than five times in these datasets. All the interactions are regarded as implicit feedback. 
% Following ~\cite{kang2018self,zhou2020s3,yuan2021icai}, the maximum length of a item sequence is set as 7 on Beauty, Sports and Toys data, as 10 on Yelp data. 
The statistics of all these four datasets after preprocessing are summarized in Tab.~\ref{tab:datasets}.
\begin{table}
    \small
	\caption{Statistics of the datasets after preprocessing.}
	\label{tab:datasets}
	\setlength{\tabcolsep}{0.6mm}{
	\begin{tabular}{lrrrr}
	\toprule
	Dataset  & Beauty & Sports & Toys & Yelp  \\
	\midrule
	\# Users & 22,363 & 35,598 & 19,412 & 30,499\\
	\# Items & 12,101 & 18,357 & 11,924 & 20,068 \\
	\# Avg. Actions / User & 8.9 & 8.3 & 8.6 & 10.4\\
	\# Avg. Actions / Item & 16.4 & 16.1 & 14.1 & 15.8\\
	\# Actions & 198,502 & 296,337 & 167,597 & 317,182 \\
% 	\# Side info. Types & 2 & 2 & 2 & 3 \\
	Sparsity & 99.93\% & 99.95\% & 99.93\% & 99.95\%\\
	\bottomrule
	\end{tabular}
	}
\end{table}
\subsubsection{Evaluation Metrics.}
In our experiments, following the prior works~\cite{kang2018self,sun2019bert4rec}, we use \textit{leave-one-out} strategy for evaluation. Specifically, for each user-item interaction sequence, the last two items are reserved as validation and testing data, respectively, and the rest are utilized for training SR models. The performances of SR models are evaluated by top-$K$ Recall (Recall@$K$) and top-$K$ Normalized Discounted Cumulative Gain (NDCG@$K$) with $K$ chosen from $\{10,20\}$, which are two commonly used metrics. As suggested by~\cite{krichene2020sampled,dallmann2021case}, we evaluate model performance in a full ranking manner for a fair comparison. The ranking results are obtained over the whole item set rather than sampled ones.
% instead of using negative samping methods like uniform sampling or popularity-base sampling, 
%We compare with state-of-the-art 
\subsubsection{Baseline Models.}
\label{baselines}
We choose two types of state-of-the-art methods for comparison, including strong basic SR methods and recent competitive side information integrated SR methods.
The baselines are introduced as follows.
\begin{itemize}
    \item \textbf{GRU4Rec}~\cite{hidasi2015session}: A session-based recommender system that uses RNN to capture sequential patterns. 
    \item \textbf{GRU4Rec$_\text{F}$}: An enhanced version of GRU4Rec that considers the side information to improve the performance.
    \item \textbf{Caser}~\cite{tang2018personalized}: A CNN-based model that uses horizontal and vertical convolutional filters to learn multi-level patterns and user preferences.
    \item \textbf{BERT4Rec}~\cite{sun2019bert4rec}: A bidirectional self-attention network that uses Cloze task to model user behavior sequences.
    \item \textbf{SASRec}~\cite{kang2018self}: An attention-based model that uses self-attention network for the sequential recommendation.
    \item \textbf{SASRec$_\text{F}$}: An extension of SASRec, which fuses the item and attribute representation through concatenation operation before feeding them to the model.
    \item \textbf{S$^3$-Rec}~\cite{zhou2020s3}: A self-supervised learning based model with four elaborately designed optimization objectives to learn the correlation in the raw data.
    \item \textbf{ICAI-SR}~\cite{yuan2021icai}: A generic framework that carefully devised attention-based Item-Attribute Aggregation model (IAA) and Entity Sequential (ES) models for exploiting various relations between items and attributes. For a fair comparison, we instantiate ICAI-SR by taking SASRec as ES models in our experiments.
    \item \textbf{NOVA}~\cite{liu2021non}: A framework that adopts non-invasive self-attention (NOVA) mechanism for better attention distribution learning. Similar to ICAI-SR, we implement NOVA mechanism on SASRec for a fair comparison.
\end{itemize}
\begin{table*}[t]
    \caption{Overall performance. Bold scores represent the highest results of all methods. Underlined scores stand for the highest results from previous methods. }
    \small
    \begin{tabular}{c|l|ccccccccc|c}
    \toprule
         Dataset& Metric  & GRU4Rec & Caser & BERT4Rec & GRU4Rec$_\text{F}$ & SASRec & SASRec$_\text{F}$ & S$^{3}$Rec & NOVA& ICAI & DIF-SR \\
         \midrule
         \multirow{4}*{Beauty}&Recall@10& 0.0530 & 0.0474 & 0.0529  & 0.0587  & 0.0828  & 0.0719 & 0.0868 & \underline{0.0887} & 0.0879& \textbf{0.0908}\\
         &Recall@20& 0.0839 & 0.0731 & 0.0815 & 0.0902  & 0.1197 & 0.1013 & 0.1236 & \underline{0.1237} & 0.1231 & \textbf{0.1284}\\
         &NDCG@10& 0.0266 & 0.0239 & 0.0237  & 0.0290  & 0.0371  & 0.0414 & \underline{0.0439} & \underline{0.0439} & \underline{0.0439} & \textbf{0.0446}\\
         &NDCG@20& 0.0344 & 0.0304 & 0.0309  & 0.0369  & 0.0464  & 0.0488 & \underline{0.0531} & 0.0527& 0.0528& \textbf{0.0541}\\
         \midrule
         \multirow{4}*{Sports}&Recall@10& 0.0312 & 0.0227 & 0.0295  & 0.0394  & 0.0526  & 0.0435 & 0.0517 & \underline{0.0534} & 0.0527& \textbf{0.0556}\\
         &Recall@20& 0.0482 & 0.0364 & 0.0465  & 0.0610  & \underline{0.0773}  & 0.0640 & 0.0758 & 0.0759& 0.0762 & \textbf{0.0800}\\
         &NDCG@10& 0.0157 & 0.0118 & 0.0130  & 0.0199  & 0.0233  & 0.0235 & 0.0249 & \underline{0.0250} & 0.0243& \textbf{0.0264}\\
         &NDCG@20& 0.0200 & 0.0153 & 0.0173  & 0.0253  & 0.0295  & 0.0286 & \underline{0.0310} & 0.0307& 0.0302& \textbf{0.0325}\\
         \midrule
         \multirow{4}*{Toys}&Recall@10& 0.0370 & 0.0361 & 0.0533  & 0.0492  & 0.0831  & 0.0733 & 0.0967 & \underline{0.0978} & 0.0972& \textbf{0.1013}\\
         &Recall@20& 0.0588 & 0.0566 & 0.0787  & 0.0767  & 0.1168  & 0.1052 & \underline{0.1349} & 0.1322& 0.1303& \textbf{0.1382}\\
         &NDCG@10& 0.0184 & 0.0186 & 0.0234  & 0.0246  & 0.0375  & 0.0417 & 0.0475 & \underline{0.0480} & 0.0478& \textbf{0.0504}\\
         &NDCG@20& 0.0239 & 0.0238 & 0.0297  & 0.0316  & 0.0460  & 0.0497 & \underline{0.0571} & 0.0567& 0.0561& \textbf{0.0597}\\
         \midrule
         \multirow{4}*{Yelp}&Recall@10& 0.0361 & 0.0380 & 0.0524  & 0.0361  & 0.0650  & 0.0413 & 0.0589 & \underline{0.0681} & 0.0663& \textbf{0.0698}\\
         &Recall@20& 0.0592 & 0.0608 & 0.0756  & 0.0578  & 0.0928  & 0.0675 & 0.0902 & \underline{0.0964} & 0.0940& \textbf{0.1003}\\
         &NDCG@10& 0.0184 & 0.0197 & 0.0327  & 0.0182  & 0.0401  & 0.0216 & 0.0338 & \underline{0.0412} & 0.0400& \textbf{0.0419}\\
         &NDCG@20& 0.0243 & 0.0255 & 0.0385  & 0.0236  & 0.0471  & 0.0282 & 0.0416 & \underline{0.0483} & 0.0470& \textbf{0.0496}\\
         \bottomrule
    \end{tabular}
    \label{tab:overall}
\end{table*}
\subsubsection{Implementation Details.}
All the baselines and ours are implemented based on the popular recommendation framework RecBole~\cite{zhao2021recbole} and evaluated with the same setting.
For all baselines and our proposed method, we train them with Adam optimizer for 200 epochs, with a batch size of 2048 and a learning rate of 1e-4. 
The hidden size of our DIF-SR and other attention-based baselines are all set to $256$. 
For other hyperparameters, we apply grid-search to find the best config for our model and baseline models that involve the following hyperparameters. The searching space is:  $attribute\_embedding\_size\in \{16,32,64,128,256\}$, $num\_heads\in \{2,4,8\}$, $num\_layers\in \{2,3,4\}$. Balance parameter $\lambda$ in Equ.\eqref{eq:all_loss} is chosen from $\{5,10,15,20,25\}$, and fusion function $\mathcal{F}$ in Equ.\eqref{equ:fusion} for NOVA and ours is chosen from addition, concatenation and gating~\cite{liu2021non}.

\subsection{Overall Performance (RQ1)}
The overall performance of different methods on all datasets is summarized in Table~\ref{tab:overall}. Based on these results, we can observe that:
For four basic SR baselines, it can be observed that SASRec outperforms other methods by a large margin, while BERT4Rec is better than or close to GRU4Rec and Caser in most cases, demonstrating the superiority of attention-based methods on sequential data. 
Note that although BERT4Rec is proposed to be an advanced version of SASRec, its performance is not so good as SASRec under the full ranking evaluation setting, which is also found in prior works \cite{zhou2020s3,li2021lightweight}.
We attribute this phenomenon to the mismatch between masked item prediction and the intrinsic autoregressive nature of SR. And the superior performance of BERT4Rec under the popularity-based sampling strategy in the original paper may be because the bidirectional encoder with Cloze task used by BERT4Rec could learn better representations for popular items for evaluation.
% while be over-parameterized for other unpopular ones, which will cause the different performance under full ranking and poplar ranking.
% Intuitively,
% their evaluation results are consistent across all datasets, \ie \textit{SASRec $>$ BERT4Rec $>$ Caser $>$ GRU4Rec}, which demonstrates the superiority of attention-based methods on sequential data. 
% In particular, SASRec could be regarded as a strong baseline due to its significant improvements on all metrics. 

Based on the aforementioned findings, we implement all attention-based side information aware baselines based on the same attentive structure from SASRec for fair comparison under the full ranking setting. The reimplementation details are discussed in Sec.~\ref{baselines}.
For the side information aware baselines, it can be found that the simple early fusion solution for GRU4Rec$_\text{F}$ and SASRec$_\text{F}$ does not always improve the performance comparing with the version without side information, which is in line with our analysis that early-integrated representation forces the model to design complex merging solution, otherwise it will even harm the prediction performance. Also, the recently proposed side information aware SR models, i.e., S$^3$Rec, NOVA, ICAI, get better and comparative performance, suggesting that with carefully-designed fusion strategy, side information can improve prediction performance.
% it can be observed that
%  with comprehensive tuning

Finally, it is clear to see that our proposed DIF-SR consistently outperforms both state-of-the-art SR models and side information integrated SR models on all four datasets in terms of all evaluation metrics. 
Different from the baselines, we decouple the attention calculation process of side information and propose to add attribute predictors upon the learned item representation. DIF inherently enhances the expressiveness of self-attention with higher-rank attention matrices, avoidance of mixed correlation, and flexible training gradient, while AAP further strengthens the mutual interactions between side information and item information during the training period. 
These results demonstrate the effectiveness of the proposed solution to improve the performance by leveraging side information for the attention-based sequential recommendation.
\begin{table}[t]
    \centering
    \caption{Performance comparison of self-attention based sequential models with their DIF \& AAP incorporated version on Beauty, Sports and Toys datasets.}
    \resizebox{\linewidth}{!}{
    \begin{tabular}{c|c|c c|c c|c c}
    \toprule
    \multirow{2}*{\textbf{Dataset}} & \multirow{2}*{\textbf{Metric}} & \multicolumn{2}{c|}{DIF-SASRec} & \multicolumn{2}{c|}{DIF-BERT4Rec} & \multicolumn{2}{c}{Improv.} \\
              ~ & & w/o & w & w/o & w & DIF-SASRec & DIF-BERT4Rec  \\
    \midrule
    \multirow{6}*{Beauty}
              ~ & Recall@10 &  0.0828 &  0.0908 &  0.0529 &  0.0579 &  9.66\% &  9.45\% \\
              ~ & Recall@20 &  0.1197 &  0.1284 &  0.0815 &  0.0915 &  7.27\% &  12.27\% \\  
              ~ & NDCG@10 &  0.0371 &  0.0446 &  0.0237 &  0.0279 &  20.22\% &  17.72\%\\   
              ~ & NDCG@20 &  0.0464 &  0.0541 &  0.0309 &  0.0363 &  16.59\% &  17.48\%\\
    \midrule
    \multirow{6}*{Sports}
              ~ & Recall@10 &  0.0526 &  0.0556 &  0.0295 &  0.0394 &  5.70\% &  33.56\% \\
              ~ & Recall@20 &  0.0773 &  0.0800 &  0.0465 &  0.0611 &  3.49\% &  31.40\% \\ 
              ~ & NDCG@10 &  0.0233 &  0.0264 &  0.0130 &  0.0198 &  13.30\% &  52.31\% \\   
              ~ & NDCG@20 &  0.0295 &  0.0325 &  0.0173 &  0.0252 &  10.17\% &  45.66\% \\
    \midrule
    \multirow{6}*{Toys}
              ~ & Recall@10 &  0.0831 &  0.1013 &  0.0533 &  0.0599 &  21.90\% &  12.38\% \\
              ~ & Recall@20 &  0.1168 &  0.1382 &  0.0787 &  0.0851 &  18.32\% &  8.13\% \\ 
              ~ & NDCG@10 &  0.0375 &  0.0504 &  0.0234 &  0.0324 &  34.40\% &  38.46\% \\   
              ~ & NDCG@20 & 0.0460 &  0.0597 & 0.0297 & 0.0387 & 29.78\% & 30.30\% \\
    % \midrule
    % \multirow{6}*{Yelp}
    %           ~ & Recall@10 & 0.0650 & 0.0698 & 0.0524 & 0.0505 & 7.38\% & -3.63\% \\
    %           ~ & Recall@20 & 0.0928 & 0.1003 & 0.0756 & 0.0759 & 8.08\% & 0.40\% \\  
    %           ~ & NDCG@10 & 0.0401 & 0.0419 & 0.0327 & 0.0277 & 4.49\% & -15.29\% \\   
    %           ~ & NDCG@20 & 0.0471 & 0.0496 & 0.0385 & 0.0340 & 5.31\% & -11.69\% \\
    % \midrule
    % \multirow{2}*{Lastfm} & HR@20 & \textbf{55.63} & 53.92 & 50.49 & 50.38 \\
    %           ~ & MRR@20 & \textbf{18.97} & 18.71 & 16.3 & 16.11  \\
    \bottomrule
          \end{tabular}
         }
    \label{tab:improvement}
\end{table}

% \begin{table}[t]
% \caption{Ablation study on Sports and Yelp datasets.}
% \begin{small}
% \centering
% \begin{tabular}{cp{0mm}|cc|cc|cc}
% \toprule
%  \multirow{2}{*}{\textbf{ID}}& & \multicolumn{2}{c|}{\textbf{Settings}} & \multicolumn{2}{c|}{\textbf{Yelp}}& \multicolumn{2}{c}{\textbf{Sports}} \\
% \cmidrule(lr){3-8}
% & & \textsc{DIF}  & \textsc{AAP}  & Recall@20 & +$\Delta$& Recall@20 & +$\Delta$         \\ 
%         \midrule
% 1&       & \xmark& \xmark & 0.0663 & - & 0.0621 & - \\ 
% 2&       & \xmark& \cmark & 0.0663 & +0\% & 0.0754 & +21.42\%  \\ 
% 3&       & \cmark& \xmark & 0.0968 & +46.00\% & 0.0767 & +23.51\%   \\ 
% 4&       & \cmark& \cmark & 0.1003  & +51.28\% & 0.0800 & +28.82\%  \\ 
% \bottomrule
% \label{tab: ablation}
% \end{tabular}
% \end{small}
% \end{table}

\subsection{Enhancement Study (RQ2)}
With the simple and effective design, we argue that DIF and AAP can be easily incorporated into any self-attention based SR model and boost the performance. To verify this, we conduct experiments on two representative models: SASRec and BERT4Rec, which represent the unidirectional model and bidirectional model, respectively. 

As shown in Tab.~\ref{tab:improvement}, with our design, the enhanced models significantly outperform the original models. Specifically, DIF-BERT4Rec achieves $\textbf{18.46\%}$ and $\textbf{36.16\%}$ average relative improvements and DIF-SASRec achieves $\textbf{12.42\%}$ and $\textbf{22.64\%}$ average relative improvements on three datasets in terms of Recall@$10$ and NDCG@$10$.
It demonstrates that the proposed solution can effectively fuse side information to help make next-item predictions for various attention-based SR models, indicating that it can potentially make more impact as a plug-in module in state-of-the-art SR models.

\subsection{Ablation and Hyper-parameter Study (RQ3)}

\subsubsection{Effectiveness of Different Components.}
\label{exp:components}
In order to figure out the contribution of different components in our proposed DIF-SR, we conduct ablation study for every proposed components on Sports and Yelp datasets, as shown in Tab.~\ref{tab: ablation}.:
\begin{itemize}
    \item (\textit{DIF-SR w/o DIF \& AAP}) DIF-SR without both DIF and AAP, which is identical to SASRec$_\text{F}$.
    \item (\textit{DIF-SR w/o DIF}) DIF-SR without DIF, which adopts early fusion as SASRec$_\text{F}$ and keep other settings including AAP same as the original model.
    \item (\textit{DIF-SR w/o AAP}) DIF-SR without AAP, which only uses the item predictor for training. 
\end{itemize}
% ID 1 is DIF-SR without both DIF and AAP, which is identical to SASRecF. 
% ID 2 is DIF-SR without DIF, which adopts the same fusion solution as ID 1 and keep other settings including AAP same as the original model (ID 4).
% ID 3 is DIF-SR without AAP, which only uses the item predictor for training. 
Then we have the following observations:

First, DIF is the most effective component in DIF-SR framework. This could be demonstrated by the fact that \textit{DIF-SR w/o AAP} outperforms \textit{DIF-SR w/o DIF \& AAP} by a large margin. This observation verifies that decoupled side information significantly improves the representation power of attention matrices. 

Second, only using AAP on the prior integrated embedding based method \textbf{can not} always improve the performance, which is in line with our design: AAP is proposed to activate the attributes' influence on item-to-item attention in the self-attention layer, while prior solutions do not enable such influence.

Third, the AAP-based training paradigm can further improve the model performance when combined with DIF, verifying its capacity to activate the beneficial interaction and improve the performance.
% , as \textit{DIF-SR} and \textit{DIF-SR w/o DIF} beats \textit{DIF-SR w/o AAP} and \textit{DIF-SR w/o DIF \& AAP}, respectively.
% on Sports and Yelp datasets, as shown in Tab.~\ref{tab: ablation}.
% 
% For DIF-SR without DIF (ID 2), we use direct fusion solution in SASRecF instead of DIF and keep other setting including AAP same as the original model.
% We can observe that the performances drop sharply on both two datasets compared with original model (ID 4), which demonstrates the effectiveness of DIF.
% As shown in Tab.~\ref{tab: ablation}, the results show that *.
% For DIF-SR without AAP (ID 3), we only use the item predictor for training. We can observe that the proposed auxiliary predictors improve the effectiveness of 

\begin{table}[t]
\caption{Ablation study of DIF and AAP on Yelp, Sports and Beauty datasets.}
% on Yelp, Sports and Beauty datasets
\resizebox{\linewidth}{!}{
% \begin{small}
\centering
\begin{tabular}{cc|cc|cc|cc}
\toprule
 \multicolumn{2}{c|}{\textbf{Settings}} & \multicolumn{2}{c|}{\textbf{Yelp}}& \multicolumn{2}{c|}{\textbf{Sports}}& \multicolumn{2}{c}{\textbf{Beauty}} \\
% \cmidrule(lr){3-8}
\textsc{DIF}  & \textsc{AAP}  & Recall@20 & +$\Delta$& Recall@20 & +$\Delta$& Recall@20 & +$\Delta$
\\ 
        \midrule
\xmark& \xmark & 0.0663 & - & 0.0621 & - & 0.0996 & -\\ 
\xmark& \cmark & 0.0663 & +0\% & 0.0754 & +21.42\% & 0.0991 & -0.50\% \\ 
\cmark& \xmark & 0.0968 & +46.00\% & 0.0767 & +23.51\% & 0.1240 & +24.50\%  \\ 
\cmark& \cmark & 0.1003  & +51.28\% & 0.0800 & +28.82\% & 0.1284 & +28.92\% \\ 
\bottomrule
\label{tab: ablation}
\end{tabular}}
% \end{small}
\end{table}
\begin{table}[t]
\caption{Performance comparison of using different kinds of side information on Yelp dataset.}
    \resizebox{\linewidth}{!}{
\centering
\fontsize{9}{11}\selectfont 
\setlength{\tabcolsep}{0.4em}
\begin{tabular}{lcccc}
    \toprule
% \hline
Side-info & Recall@10 & Recall@20 & NDCG@10 & NDCG@20 \\ \hline
Position & 0.0655 &  0.0954 & 0.0405 & 0.048 \\
Position + Categories & 0.0698 & 0.1003 & 0.0419 & 0.0496 \\
Position + City & 0.0691 & 0.1001 & 0.0415 & 0.0493 \\
Position + Categories + City & \textbf{0.0699} & \textbf{0.1010} & \textbf{0.0421} & \textbf{0.0499}  \\
    \bottomrule
% \hline
% None & 0.0531 & 0.0785 & 0.0238 & 0.0302 \\
% Categories & 0.0571 & 0.0850 & \textbf{0.0261} & \textbf{0.0331} \\
% Brand & 0.0569 & 0.0854 & 0.0256 & 0.0327 \\
% All & \textbf{0.0584} & \textbf{0.0869} & 0.0258 & 0.033  \\ 
% \hline
% None & 0.0803 & 0.1169 & 0.0368 & 0.0460 \\
% Categories & 0.0887 & 0.1273 & \textbf{0.0403} & \textbf{0.0501} \\
% Brand & 0.0873 & 0.125 & 0.0394 & 0.0489 \\
% All & \textbf{0.0889} & \textbf{0.1287} & 0.0399 & 0.0499  \\ 
% \hline
% None & 0.0638 & 0.0907 & 0.0394 & 0.0462 \\
% Categories & 0.0696 & 0.0997 & 0.0417 & 0.0493 \\
% City & 0.0701 & 0.1010 & 0.0419 & 0.0496 \\
% % Postal Code & 0.0691 & 0.0991 & 0.0414 & 0.0489 \\
% All & 0.0677 & 0.0968 & 0.0409 & 0.0482  \\ \hline
\end{tabular}}
\label{tab:sideInfoAblation}
\end{table}

\subsubsection{Effectiveness of Different Kinds of Side Information.}
\label{exp:side info}
In order to figure out the contribution of different kinds of side information in our proposed DIF-SR, we conduct a study for various attributes. The position is the special and basic side information to enable order-aware self-attention, and other item-related side information of Yelp includes city and category. As shown in Tab.~\ref{tab:sideInfoAblation}, both item-related attributes can help the prediction by a large margin, demonstrating the effectiveness of side information fusion. Also, the combination of two attributes can further improve the performance slightly, which shows that our proposed solution can jointly leverage useful information from various resources.
% although they may contain redundant information.
% on one hand shows that various attrib
\begin{figure}
    \centering
    \includegraphics[width=8.5cm]{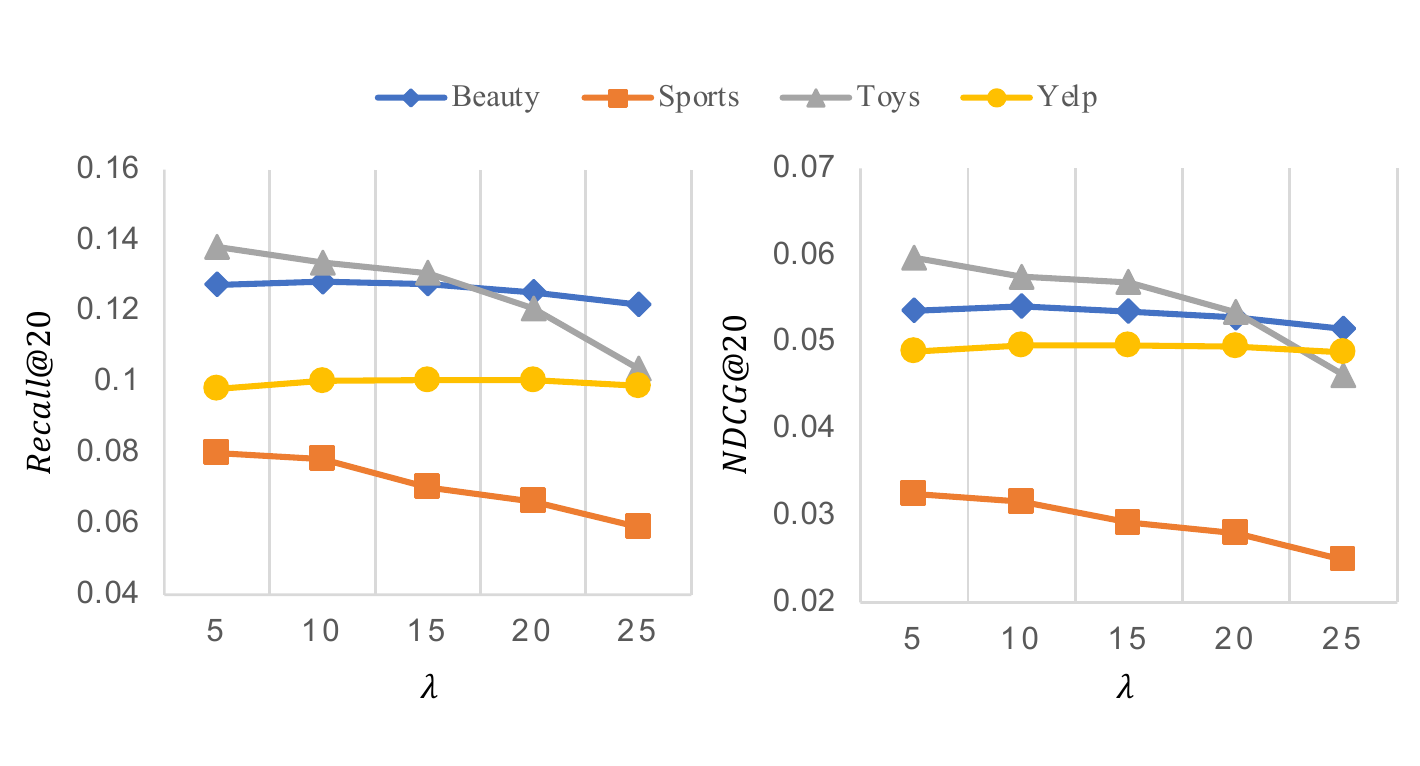}
    \caption{Effect of balance parameter $\lambda$.}
    \label{fig:param-lambda}
\end{figure}
\begin{figure}
    \centering
    \includegraphics[width=8.5cm]{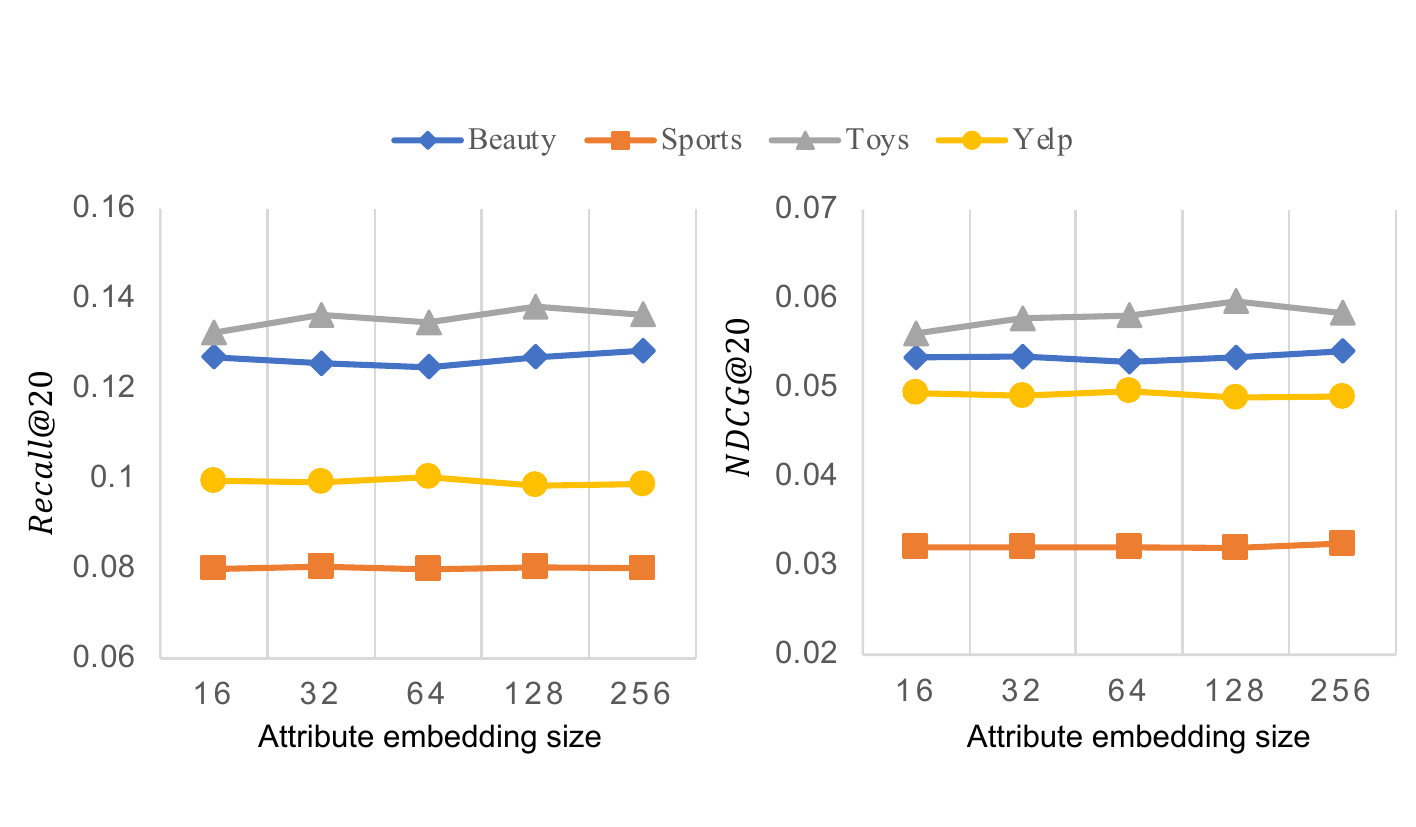}
    % \caption{Parameter sensitivity of attribute embedding size.}
    \caption{Effect of attribute embedding size $d_{f\cdot}$.}
    \label{fig:param-attribute}
\end{figure}

\subsubsection{Impact of Hyperparameters and Fusion Function.}
\label{exp:hyper}
In this experiment, we aim to investigate the effect of two hyperparameters, i.e., the balance parameter $\lambda $ for loss function and the attribute embedding size $d_{f}$ for category. We also experiment on the effect of fusion function $\mathcal{F}$.

Fig.~\ref{fig:param-lambda} presents the Recall@20 and NDCG@20 scores of DIF-SR with different $\lambda$. 
For all these four datasets, the best performance is achieved when the balance parameter $\lambda$ is 5 or 10. 
For Beauty and Yelp datasets, the model performance varies in extremely small range and   setting $\lambda$ to 10 is a better choice.

Fig.~\ref{fig:param-attribute} shows the effect of attribute embedding size on DIF-SR's performance. We observe that, in most cases, the performance is consistent across different choices of the attribute embedding size $d_{f}$ for category. Based on this, it is possible to greatly reduce the model complexity of DIF-SR by setting $d_f$ to a smaller value (typically smaller than the dimension of item embedding).
% , which could not be achieved in previous side information aware models.

Fig.~\ref{fig:param-fusion} demonstrates that our method is also robust regarding different fusion functions, suggesting that simple fusion solutions, such as addition, do not harm the capacity of our model.

% \begin{figure}[t]
%     \centering
%     \includegraphics[width=7.5cm]{images/F4-parameter sensitivity-lamda.pdf}
%     \caption{Parameter sensitivity of $\lambda$.}
%     \label{fig:param-lambda}
% \end{figure}
% \begin{figure}[t]
%     \centering
%     \includegraphics[width=7.5cm]{images/F5-parameter sensitivity-attribute-embedding.pdf}
%     \caption{Parameter sensitivity of the size of attribute embedding.}
%     \label{fig:param-lambda}
% \end{figure}

% \begin{figure}[t]
%     \centering
%     \subfigure{
%     \label{fig:lambda-recall10}
%     \includegraphics[width=0.465\linewidth]{images/F4-01.pdf}
%     }
%     \subfigure{
%     \label{fig:lambda-ndcg10}
%     \includegraphics[width=0.465\linewidth]{images/F4-02.pdf}
%     }
%     \caption{Parameter sensitivity of $\lambda$.}
%     \label{fig:param-lambda}
% \end{figure}
% \begin{figure}[t]
%     \centering
%     \subfigure{
%     \label{fig:attribute-recall10}
%     \includegraphics[width=0.465\linewidth]{images/F5-01.pdf}
%     }
%     \subfigure{
%     \label{fig:attribute-ndcg10}
%     \includegraphics[width=0.465\linewidth]{images/F5-02.pdf}
%     }
%     \caption{Parameter sensitivity of the size of attribute embedding.}
%     \label{fig:param-attribute}
% \end{figure}

\subsection{Visualization of Attention Distribution (RQ4)}
To discuss the interpretability of DIF-SR, we visualize the attention matrices of test samples from the Yelp dataset. Due to the limitation of space, we show one example in Fig.~\ref{fig:visualization}. The two rows represent the attention matrices of the same head in different layers of one sample. The first three columns are decoupled attention matrices for the item and the attributes, and the last one is the fused attention matrices which are used to calculate the output item representation for each layer.

We derive the following observations from the results:
1) the decoupled attention matrices for different attributes show a different preference for capturing data patterns.
2) the fused attention matrix could adaptively adjust the contribution of each kind of side information via the decoupled attention matrices and synthesize crucial patterns from them.
\begin{figure}
    \centering
    \includegraphics[width=8.5cm]{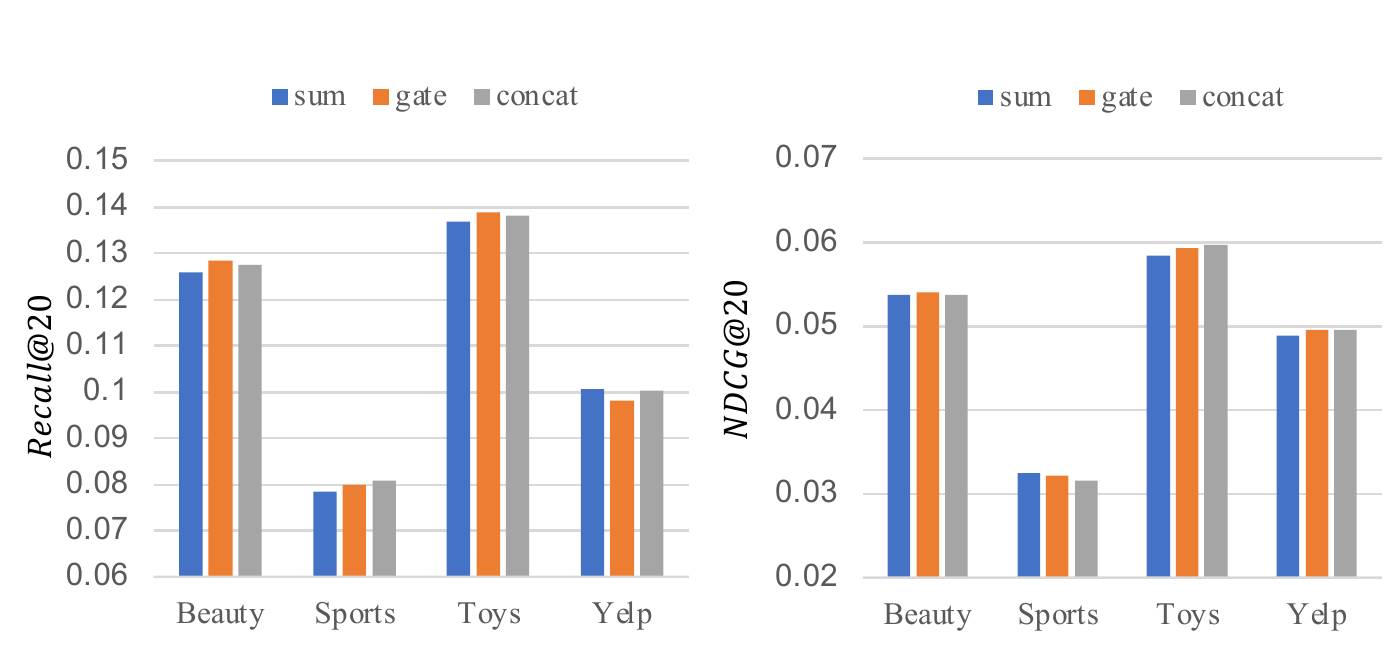}
    \caption{Effect of fusion functions $\mathcal{F}$.}
    \label{fig:param-fusion}
\end{figure}
\begin{figure}
    \centering
    \includegraphics[width=8.5cm]{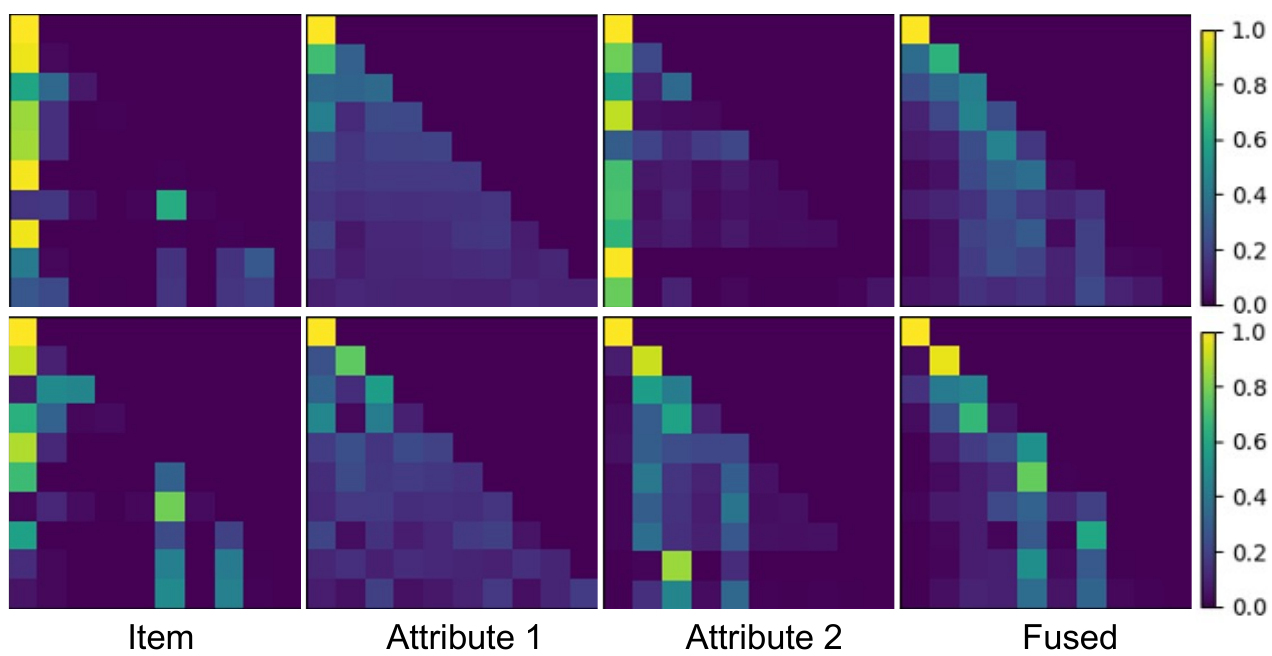}
    \caption{{Visualization of sampled attention matrices.}}
    \label{fig:visualization}
\end{figure}
% finally 
\section{Conclusion}
We propose DIF-SR to effectively fuse side information for SR via moving side information from input to the attention layer, motivated by the observation that early integration of side information and item id in the input stage limits the representation power of attention matrices and flexibility of learning gradient. Specifically, we present a novel decoupled side information fusion attention mechanism, which allows higher rank attention matrices and adaptive gradient and thus enhances the learning of item representation. Auxiliary attribute predictors are also utilized upon the final representation in a multi-task training scheme to promote the interaction of side information and item representation. 

We provide a comprehensive theoretical analysis of the proposed DIF attention mechanism.  
Also, extensive studies on four public datasets show that our DIF-SR outperforms state-of-the-art basic SR models and competitive
side information integrated SR models. We also show that the DIF attention mechanism and AAP-based training scheme can be readily incorporated into attention-based SR models to benefit with side information fusion.
We hope that the future work can explore more possibilities of including side information in different stages of the model and further advance side information aware SR.
% In this work, we observe strong potential in manipulating attention mechanism with 

\section{Acknowledgement}
We thank RecBole team for a unified and comprehensive open-source recommendation library.
We thank Jaeboum Kim, Jipeng Zhang, Zhenxing Mi, Chenyang Qi, Junjie Wu, and Maosheng Ye for valuable discussion and suggestions. 

% \section{Appendix}
\newpage
\appendix
\section{Proofs} \label{appendix:proofs}
\begin{proof}[Proof of Theorem~\ref{thm:rank}]
For the first claim in Thm.~\ref{thm:rank}, note that we have $W_Q \in \mathbb{R}^{d \times d_h}$. Then, we can then leverage the well-proved theorem that the rank of the multiplication of the matrices is equal or less than the minimum of the individual ones: 
\begin{align*}
    rank(att) &= rank(R  W_Q  W_K^\top R^\top) \\
    &\le \min(rank(R), rank(W_Q), rank(R), rank(W_K)) \\
    &\le d_h,
\end{align*} 

For the second claim, we can construct a special case with the following strategy.
Let $W_Q = W_K$ be the matrices with rows $[0:d_h]$ as identity matrix. Let $W_Q^{f1} = W_K^{f1}$ be such with rows $[d_h:d_h + d_{h1}]$ as identity matrix. For $j > 1$, let $W_Q^{fj} = W_K^{fj}$ be such with rows $[d_h + \Sigma_{i = 1}^{j-1}d_{hi}: d_h + \Sigma_{i = 1}^{j}d_{hi}]$ as identity matrix. Other rows are kept all zeros.
Let $R^{(ID)}, E^{f1},\dots, E^{fp}$ be the matrices with rows $[0:d],[0:d_{f1}], \dots, [0:d_{fp}]$ as identity matrix and others as zeros.
Then we will get
\begin{align*}
    &DIF\_{att} = R^{(ID)} W_Q W_K^\top R^{(ID)\top} + \Sigma_{j=1}^{p}(E^{fj}W_Q^{fj}W_K^{fj\top}E^{fj\top}) \\
    =&\left( 
                      \begin{array}{cccc}
                             I_{d_h, d_h} &  & & \\
                             & I_{d_{h_1}, d_{h_1}}& & \\
                             & &\cdots& \\
                              & &I_{d_{h_p}, d_{h_p}}& \\
                              &  &  &\mathbf{0}_{n-(d_h+\Sigma_{j=1}^{p}d_{h_j}), n-(d_h+\Sigma_{j=1}^{p}d_{h_j})}
                      \end{array}
                \right),
\end{align*} 
where $I$ denotes the identity matrix, and empty as well as $\mathbf{0}$ denotes the all zeros matrix. 
Thus, we have
\begin{align*}
   rank(DIF\_{att}) = d_h + \Sigma_{j=1}^{p}d_{h_j} > d_h.
\end{align*}
\end{proof}

\begin{proof}[Proof of Theorem~\ref{thm:gradients}]
Let $\nabla_{E^{ID}} L$ and $\nabla_{E^{fj}} L$ denote the gradients of the objective function in Equ.\eqref{eqsas} for $E^{ID}$ and $E^{fj}, j \in [p]$, Then, we have
\begin{align*}
    \nabla_{E^{ID}} L &= \nabla_{G} L \cdot \nabla_{E^{ID} + \Sigma_{i=1}^{p}E^{fi}} G \cdot \nabla_{E^{ID}} (E^{ID} + \Sigma_{i=1}^{p}E^{fi}) \\
    & = \nabla_{G} L \cdot \nabla_{E^{ID} + \Sigma_{i=1}^{p}E^{fi}} G \\
    \nabla_{E^{fj}} L &= \nabla_{G} L \cdot \nabla_{E^{ID} + \Sigma_{i=1}^{p}E^{fi}} G \cdot \nabla_{E^{fj}} (E^{ID} + \Sigma_{i=1}^{p}E^{fi}) \\
    & = \nabla_{G} L \cdot \nabla_{E^{ID} + \Sigma_{i=1}^{p}E^{fi}} G.
\end{align*}
Similarly, let $\nabla_{E^{fj}} L$ denotes the gradient for the objective function in Equ.\eqref{eqnova} for $E^{fj}, j \in [p]$, then we have:
\begin{align*}
    \nabla_{E^{fj}} L &= \nabla_{G} L \cdot \nabla_{\Sigma_{i=1}^{p}E^{fi}} G \cdot \nabla_{E^{fj}} (\Sigma_{i=1}^{p}E^{fi}) \\
    & = \nabla_{G} L \cdot \nabla_{\Sigma_{i=1}^{p}E^{fi}} G,
\end{align*}
which is the same for every attribute $j$.
\end{proof}

\pagebreak

\bibliographystyle{ACM-Reference-Format} 
\bibliography{sample} 

%%% -*-BibTeX-*-
%%% Do NOT edit. File created by BibTeX with style
%%% ACM-Reference-Format-Journals [18-Jan-2012].

\begin{thebibliography}{38}

%%% ====================================================================
%%% NOTE TO THE USER: you can override these defaults by providing
%%% customized versions of any of these macros before the \bibliography
%%% command.  Each of them MUST provide its own final punctuation,
%%% except for \shownote{}, \showDOI{}, and \showURL{}.  The latter two
%%% do not use final punctuation, in order to avoid confusing it with
%%% the Web address.
%%%
%%% To suppress output of a particular field, define its macro to expand
%%% to an empty string, or better, \unskip, like this:
%%%
%%% \newcommand{\showDOI}[1]{\unskip}   % LaTeX syntax
%%%
%%% \def \showDOI #1{\unskip}           % plain TeX syntax
%%%
%%% ====================================================================

\ifx \showCODEN    \undefined \def \showCODEN     #1{\unskip}     \fi
\ifx \showDOI      \undefined \def \showDOI       #1{#1}\fi
\ifx \showISBNx    \undefined \def \showISBNx     #1{\unskip}     \fi
\ifx \showISBNxiii \undefined \def \showISBNxiii  #1{\unskip}     \fi
\ifx \showISSN     \undefined \def \showISSN      #1{\unskip}     \fi
\ifx \showLCCN     \undefined \def \showLCCN      #1{\unskip}     \fi
\ifx \shownote     \undefined \def \shownote      #1{#1}          \fi
\ifx \showarticletitle \undefined \def \showarticletitle #1{#1}   \fi
\ifx \showURL      \undefined \def \showURL       {\relax}        \fi
% The following commands are used for tagged output and should be
% invisible to TeX
\providecommand\bibfield[2]{#2}
\providecommand\bibinfo[2]{#2}
\providecommand\natexlab[1]{#1}
\providecommand\showeprint[2][]{arXiv:#2}

\bibitem[\protect\citeauthoryear{Chang, Gao, Zheng, Hui, Niu, Song, Jin, and
  Li}{Chang et~al\mbox{.}}{2021}]%
        {chang2021sequential}
\bibfield{author}{\bibinfo{person}{Jianxin Chang}, \bibinfo{person}{Chen Gao},
  \bibinfo{person}{Yu Zheng}, \bibinfo{person}{Yiqun Hui},
  \bibinfo{person}{Yanan Niu}, \bibinfo{person}{Yang Song},
  \bibinfo{person}{Depeng Jin}, {and} \bibinfo{person}{Yong Li}.}
  \bibinfo{year}{2021}\natexlab{}.
\newblock \showarticletitle{Sequential Recommendation with Graph Neural
  Networks}. In \bibinfo{booktitle}{\emph{Proceedings of the 44th International
  ACM SIGIR Conference on Research and Development in Information Retrieval}}.
  \bibinfo{pages}{378--387}.
\newblock


\bibitem[\protect\citeauthoryear{Chen, Tsai, Bhojanapalli, Chung, Chang, and
  Ferng}{Chen et~al\mbox{.}}{2021}]%
        {chen2021simple}
\bibfield{author}{\bibinfo{person}{Pu-Chin Chen}, \bibinfo{person}{Henry Tsai},
  \bibinfo{person}{Srinadh Bhojanapalli}, \bibinfo{person}{Hyung~Won Chung},
  \bibinfo{person}{Yin-Wen Chang}, {and} \bibinfo{person}{Chun-Sung Ferng}.}
  \bibinfo{year}{2021}\natexlab{}.
\newblock \showarticletitle{A Simple and Effective Positional Encoding for
  Transformers}. In \bibinfo{booktitle}{\emph{Proceedings of the 2021
  Conference on Empirical Methods in Natural Language Processing}}.
  \bibinfo{pages}{2974--2988}.
\newblock


\bibitem[\protect\citeauthoryear{Cui, Cai, Wu, Ma, and Wang}{Cui
  et~al\mbox{.}}{2021}]%
        {cui2021motif}
\bibfield{author}{\bibinfo{person}{Zeyu Cui}, \bibinfo{person}{Yinjiang Cai},
  \bibinfo{person}{Shu Wu}, \bibinfo{person}{Xibo Ma}, {and}
  \bibinfo{person}{Liang Wang}.} \bibinfo{year}{2021}\natexlab{}.
\newblock \showarticletitle{Motif-aware Sequential Recommendation}. In
  \bibinfo{booktitle}{\emph{Proceedings of the 44th International ACM SIGIR
  Conference on Research and Development in Information Retrieval}}.
  \bibinfo{pages}{1738--1742}.
\newblock


\bibitem[\protect\citeauthoryear{Dallmann, Zoller, and Hotho}{Dallmann
  et~al\mbox{.}}{2021}]%
        {dallmann2021case}
\bibfield{author}{\bibinfo{person}{Alexander Dallmann}, \bibinfo{person}{Daniel
  Zoller}, {and} \bibinfo{person}{Andreas Hotho}.}
  \bibinfo{year}{2021}\natexlab{}.
\newblock \showarticletitle{A Case Study on Sampling Strategies for Evaluating
  Neural Sequential Item Recommendation Models}. In
  \bibinfo{booktitle}{\emph{Fifteenth ACM Conference on Recommender Systems}}.
  \bibinfo{pages}{505--514}.
\newblock


\bibitem[\protect\citeauthoryear{Fan, Liu, Lian, Zhao, Xie, and Wen}{Fan
  et~al\mbox{.}}{2021}]%
        {DBLP:conf/sigir/FanLLZXW21}
\bibfield{author}{\bibinfo{person}{Xinyan Fan}, \bibinfo{person}{Zheng Liu},
  \bibinfo{person}{Jianxun Lian}, \bibinfo{person}{Wayne~Xin Zhao},
  \bibinfo{person}{Xing Xie}, {and} \bibinfo{person}{Ji{-}Rong Wen}.}
  \bibinfo{year}{2021}\natexlab{}.
\newblock \showarticletitle{Lighter and Better: Low-Rank Decomposed
  Self-Attention Networks for Next-Item Recommendation}. In
  \bibinfo{booktitle}{\emph{{SIGIR} '21: The 44th International {ACM} {SIGIR}
  Conference on Research and Development in Information Retrieval, Virtual
  Event, Canada, July 11-15, 2021}},
  \bibfield{editor}{\bibinfo{person}{Fernando Diaz}, \bibinfo{person}{Chirag
  Shah}, \bibinfo{person}{Torsten Suel}, \bibinfo{person}{Pablo Castells},
  \bibinfo{person}{Rosie Jones}, {and} \bibinfo{person}{Tetsuya Sakai}} (Eds.).
  \bibinfo{publisher}{{ACM}}, \bibinfo{pages}{1733--1737}.
\newblock
\urldef\tempurl%
\url{https://doi.org/10.1145/3404835.3462978}
\showDOI{\tempurl}


\bibitem[\protect\citeauthoryear{He and McAuley}{He and McAuley}{2016}]%
        {he2016fusing}
\bibfield{author}{\bibinfo{person}{Ruining He} {and} \bibinfo{person}{Julian
  McAuley}.} \bibinfo{year}{2016}\natexlab{}.
\newblock \showarticletitle{Fusing similarity models with markov chains for
  sparse sequential recommendation}. In \bibinfo{booktitle}{\emph{2016 IEEE
  16th International Conference on Data Mining (ICDM)}}. IEEE,
  \bibinfo{pages}{191--200}.
\newblock


\bibitem[\protect\citeauthoryear{He, Zhang, Liu, and Caverlee}{He
  et~al\mbox{.}}{2020}]%
        {he2020consistency}
\bibfield{author}{\bibinfo{person}{Yun He}, \bibinfo{person}{Yin Zhang},
  \bibinfo{person}{Weiwen Liu}, {and} \bibinfo{person}{James Caverlee}.}
  \bibinfo{year}{2020}\natexlab{}.
\newblock \showarticletitle{Consistency-Aware Recommendation for User-Generated
  Item List Continuation}. In \bibinfo{booktitle}{\emph{Proceedings of the 13th
  International Conference on Web Search and Data Mining}}.
  \bibinfo{pages}{250--258}.
\newblock


\bibitem[\protect\citeauthoryear{Hidasi, Karatzoglou, Baltrunas, and
  Tikk}{Hidasi et~al\mbox{.}}{2015}]%
        {hidasi2015session}
\bibfield{author}{\bibinfo{person}{Bal{\'a}zs Hidasi},
  \bibinfo{person}{Alexandros Karatzoglou}, \bibinfo{person}{Linas Baltrunas},
  {and} \bibinfo{person}{Domonkos Tikk}.} \bibinfo{year}{2015}\natexlab{}.
\newblock \showarticletitle{Session-based recommendations with recurrent neural
  networks}.
\newblock \bibinfo{journal}{\emph{arXiv preprint arXiv:1511.06939}}
  (\bibinfo{year}{2015}).
\newblock


\bibitem[\protect\citeauthoryear{Hidasi, Quadrana, Karatzoglou, and
  Tikk}{Hidasi et~al\mbox{.}}{2016}]%
        {hidasi2016parallel}
\bibfield{author}{\bibinfo{person}{Bal{\'a}zs Hidasi}, \bibinfo{person}{Massimo
  Quadrana}, \bibinfo{person}{Alexandros Karatzoglou}, {and}
  \bibinfo{person}{Domonkos Tikk}.} \bibinfo{year}{2016}\natexlab{}.
\newblock \showarticletitle{Parallel recurrent neural network architectures for
  feature-rich session-based recommendations}. In
  \bibinfo{booktitle}{\emph{Proceedings of the 10th ACM conference on
  recommender systems}}. \bibinfo{pages}{241--248}.
\newblock


\bibitem[\protect\citeauthoryear{Kabbur, Ning, and Karypis}{Kabbur
  et~al\mbox{.}}{2013}]%
        {kabbur2013fism}
\bibfield{author}{\bibinfo{person}{Santosh Kabbur}, \bibinfo{person}{Xia Ning},
  {and} \bibinfo{person}{George Karypis}.} \bibinfo{year}{2013}\natexlab{}.
\newblock \showarticletitle{Fism: factored item similarity models for top-n
  recommender systems}. In \bibinfo{booktitle}{\emph{Proceedings of the 19th
  ACM SIGKDD international conference on Knowledge discovery and data mining}}.
  \bibinfo{pages}{659--667}.
\newblock


\bibitem[\protect\citeauthoryear{Kang and McAuley}{Kang and McAuley}{2018}]%
        {kang2018self}
\bibfield{author}{\bibinfo{person}{Wang-Cheng Kang} {and}
  \bibinfo{person}{Julian McAuley}.} \bibinfo{year}{2018}\natexlab{}.
\newblock \showarticletitle{Self-attentive sequential recommendation}. In
  \bibinfo{booktitle}{\emph{2018 IEEE International Conference on Data Mining
  (ICDM)}}. IEEE, \bibinfo{pages}{197--206}.
\newblock


\bibitem[\protect\citeauthoryear{Ke, He, and Liu}{Ke et~al\mbox{.}}{2021}]%
        {DBLP:conf/iclr/KeHL21}
\bibfield{author}{\bibinfo{person}{Guolin Ke}, \bibinfo{person}{Di He}, {and}
  \bibinfo{person}{Tie{-}Yan Liu}.} \bibinfo{year}{2021}\natexlab{}.
\newblock \showarticletitle{Rethinking Positional Encoding in Language
  Pre-training}. In \bibinfo{booktitle}{\emph{9th International Conference on
  Learning Representations, {ICLR} 2021, Virtual Event, Austria, May 3-7,
  2021}}. \bibinfo{publisher}{OpenReview.net}.
\newblock
\urldef\tempurl%
\url{https://openreview.net/forum?id=09-528y2Fgf}
\showURL{%
\tempurl}


\bibitem[\protect\citeauthoryear{Krichene and Rendle}{Krichene and
  Rendle}{2020}]%
        {krichene2020sampled}
\bibfield{author}{\bibinfo{person}{Walid Krichene} {and}
  \bibinfo{person}{Steffen Rendle}.} \bibinfo{year}{2020}\natexlab{}.
\newblock \showarticletitle{On sampled metrics for item recommendation}. In
  \bibinfo{booktitle}{\emph{Proceedings of the 26th ACM SIGKDD International
  Conference on Knowledge Discovery \& Data Mining}}.
  \bibinfo{pages}{1748--1757}.
\newblock


\bibitem[\protect\citeauthoryear{Li, Chen, Zhang, and Yin}{Li
  et~al\mbox{.}}{2021}]%
        {li2021lightweight}
\bibfield{author}{\bibinfo{person}{Yang Li}, \bibinfo{person}{Tong Chen},
  \bibinfo{person}{Peng-Fei Zhang}, {and} \bibinfo{person}{Hongzhi Yin}.}
  \bibinfo{year}{2021}\natexlab{}.
\newblock \showarticletitle{Lightweight Self-Attentive Sequential
  Recommendation}. In \bibinfo{booktitle}{\emph{Proceedings of the 30th ACM
  International Conference on Information \& Knowledge Management}}.
  \bibinfo{pages}{967--977}.
\newblock


\bibitem[\protect\citeauthoryear{Lin, Pan, and Ming}{Lin et~al\mbox{.}}{2020}]%
        {DBLP:conf/recsys/0008P020}
\bibfield{author}{\bibinfo{person}{Jing Lin}, \bibinfo{person}{Weike Pan},
  {and} \bibinfo{person}{Zhong Ming}.} \bibinfo{year}{2020}\natexlab{}.
\newblock \showarticletitle{{FISSA:} Fusing Item Similarity Models with
  Self-Attention Networks for Sequential Recommendation}. In
  \bibinfo{booktitle}{\emph{RecSys 2020: Fourteenth {ACM} Conference on
  Recommender Systems, Virtual Event, Brazil, September 22-26, 2020}},
  \bibfield{editor}{\bibinfo{person}{Rodrygo L.~T. Santos},
  \bibinfo{person}{Leandro~Balby Marinho}, \bibinfo{person}{Elizabeth~M. Daly},
  \bibinfo{person}{Li~Chen}, \bibinfo{person}{Kim Falk}, \bibinfo{person}{Noam
  Koenigstein}, {and} \bibinfo{person}{Edleno~Silva de~Moura}} (Eds.).
  \bibinfo{publisher}{{ACM}}, \bibinfo{pages}{130--139}.
\newblock
\urldef\tempurl%
\url{https://doi.org/10.1145/3383313.3412247}
\showDOI{\tempurl}


\bibitem[\protect\citeauthoryear{Liu, Li, Cai, Dong, Zhu, and Shang}{Liu
  et~al\mbox{.}}{2021b}]%
        {liu2021non}
\bibfield{author}{\bibinfo{person}{Chang Liu}, \bibinfo{person}{Xiaoguang Li},
  \bibinfo{person}{Guohao Cai}, \bibinfo{person}{Zhenhua Dong},
  \bibinfo{person}{Hong Zhu}, {and} \bibinfo{person}{Lifeng Shang}.}
  \bibinfo{year}{2021}\natexlab{b}.
\newblock \showarticletitle{Non-invasive Self-attention for Side Information
  Fusion in Sequential Recommendation}.
\newblock \bibinfo{journal}{\emph{arXiv preprint arXiv:2103.03578}}
  (\bibinfo{year}{2021}).
\newblock


\bibitem[\protect\citeauthoryear{Liu, Wu, Wang, Li, and Wang}{Liu
  et~al\mbox{.}}{2016}]%
        {liu2016context}
\bibfield{author}{\bibinfo{person}{Qiang Liu}, \bibinfo{person}{Shu Wu},
  \bibinfo{person}{Diyi Wang}, \bibinfo{person}{Zhaokang Li}, {and}
  \bibinfo{person}{Liang Wang}.} \bibinfo{year}{2016}\natexlab{}.
\newblock \showarticletitle{Context-aware sequential recommendation}. In
  \bibinfo{booktitle}{\emph{2016 IEEE 16th International Conference on Data
  Mining (ICDM)}}. IEEE, \bibinfo{pages}{1053--1058}.
\newblock


\bibitem[\protect\citeauthoryear{Liu, Fan, Wang, and Yu}{Liu
  et~al\mbox{.}}{2021a}]%
        {liu2021augmenting}
\bibfield{author}{\bibinfo{person}{Zhiwei Liu}, \bibinfo{person}{Ziwei Fan},
  \bibinfo{person}{Yu Wang}, {and} \bibinfo{person}{Philip~S Yu}.}
  \bibinfo{year}{2021}\natexlab{a}.
\newblock \showarticletitle{Augmenting Sequential Recommendation with
  Pseudo-Prior Items via Reversely Pre-training Transformer}.
\newblock \bibinfo{journal}{\emph{arXiv preprint arXiv:2105.00522}}
  (\bibinfo{year}{2021}).
\newblock


\bibitem[\protect\citeauthoryear{Ma, Kang, and Liu}{Ma et~al\mbox{.}}{2019}]%
        {ma2019hierarchical}
\bibfield{author}{\bibinfo{person}{Chen Ma}, \bibinfo{person}{Peng Kang}, {and}
  \bibinfo{person}{Xue Liu}.} \bibinfo{year}{2019}\natexlab{}.
\newblock \showarticletitle{Hierarchical gating networks for sequential
  recommendation}. In \bibinfo{booktitle}{\emph{Proceedings of the 25th ACM
  SIGKDD international conference on knowledge discovery \& data mining}}.
  \bibinfo{pages}{825--833}.
\newblock


\bibitem[\protect\citeauthoryear{McAuley, Targett, Shi, and van~den
  Hengel}{McAuley et~al\mbox{.}}{2015}]%
        {DBLP:conf/sigir/McAuleyTSH15}
\bibfield{author}{\bibinfo{person}{Julian~J. McAuley},
  \bibinfo{person}{Christopher Targett}, \bibinfo{person}{Qinfeng Shi}, {and}
  \bibinfo{person}{Anton van~den Hengel}.} \bibinfo{year}{2015}\natexlab{}.
\newblock \showarticletitle{Image-Based Recommendations on Styles and
  Substitutes}. In \bibinfo{booktitle}{\emph{Proceedings of the 38th
  International {ACM} {SIGIR} Conference on Research and Development in
  Information Retrieval, Santiago, Chile, August 9-13, 2015}},
  \bibfield{editor}{\bibinfo{person}{Ricardo Baeza{-}Yates},
  \bibinfo{person}{Mounia Lalmas}, \bibinfo{person}{Alistair Moffat}, {and}
  \bibinfo{person}{Berthier~A. Ribeiro{-}Neto}} (Eds.).
  \bibinfo{publisher}{{ACM}}, \bibinfo{pages}{43--52}.
\newblock
\urldef\tempurl%
\url{https://doi.org/10.1145/2766462.2767755}
\showDOI{\tempurl}


\bibitem[\protect\citeauthoryear{Peng, Ren, Parthasarathy, and Ning}{Peng
  et~al\mbox{.}}{2021}]%
        {peng2021ham}
\bibfield{author}{\bibinfo{person}{Bo Peng}, \bibinfo{person}{Zhiyun Ren},
  \bibinfo{person}{Srinivasan Parthasarathy}, {and} \bibinfo{person}{Xia
  Ning}.} \bibinfo{year}{2021}\natexlab{}.
\newblock \showarticletitle{HAM: hybrid associations models for sequential
  recommendation}.
\newblock \bibinfo{journal}{\emph{IEEE Transactions on Knowledge and Data
  Engineering}} (\bibinfo{year}{2021}).
\newblock


\bibitem[\protect\citeauthoryear{Quadrana, Karatzoglou, Hidasi, and
  Cremonesi}{Quadrana et~al\mbox{.}}{2017}]%
        {quadrana2017personalizing}
\bibfield{author}{\bibinfo{person}{Massimo Quadrana},
  \bibinfo{person}{Alexandros Karatzoglou}, \bibinfo{person}{Bal{\'a}zs
  Hidasi}, {and} \bibinfo{person}{Paolo Cremonesi}.}
  \bibinfo{year}{2017}\natexlab{}.
\newblock \showarticletitle{Personalizing session-based recommendations with
  hierarchical recurrent neural networks}. In
  \bibinfo{booktitle}{\emph{Proceedings of the Eleventh ACM Conference on
  Recommender Systems}}. \bibinfo{pages}{130--137}.
\newblock


\bibitem[\protect\citeauthoryear{Rendle}{Rendle}{2010}]%
        {rendle2010factorization}
\bibfield{author}{\bibinfo{person}{Steffen Rendle}.}
  \bibinfo{year}{2010}\natexlab{}.
\newblock \showarticletitle{Factorization machines}. In
  \bibinfo{booktitle}{\emph{2010 IEEE International conference on data
  mining}}. IEEE, \bibinfo{pages}{995--1000}.
\newblock


\bibitem[\protect\citeauthoryear{Steenvoorden, Gloria, Chen, Ren, and
  de~Rijke}{Steenvoorden et~al\mbox{.}}{2020}]%
        {adsr}
\bibfield{author}{\bibinfo{person}{Anton Steenvoorden},
  \bibinfo{person}{Emanuele~Di Gloria}, \bibinfo{person}{Wanyu Chen},
  \bibinfo{person}{Pengjie Ren}, {and} \bibinfo{person}{Maarten de Rijke}.}
  \bibinfo{year}{2020}\natexlab{}.
\newblock \showarticletitle{Attribute-aware Diversification for Sequential
  Recommendations}.
\newblock \bibinfo{journal}{\emph{CoRR}}  \bibinfo{volume}{abs/2008.00783}
  (\bibinfo{year}{2020}).
\newblock


\bibitem[\protect\citeauthoryear{Sun, Liu, Wu, Pei, Lin, Ou, and Jiang}{Sun
  et~al\mbox{.}}{2019}]%
        {sun2019bert4rec}
\bibfield{author}{\bibinfo{person}{Fei Sun}, \bibinfo{person}{Jun Liu},
  \bibinfo{person}{Jian Wu}, \bibinfo{person}{Changhua Pei},
  \bibinfo{person}{Xiao Lin}, \bibinfo{person}{Wenwu Ou}, {and}
  \bibinfo{person}{Peng Jiang}.} \bibinfo{year}{2019}\natexlab{}.
\newblock \showarticletitle{BERT4Rec: Sequential recommendation with
  bidirectional encoder representations from transformer}. In
  \bibinfo{booktitle}{\emph{Proceedings of the 28th ACM international
  conference on information and knowledge management}}.
  \bibinfo{pages}{1441--1450}.
\newblock


\bibitem[\protect\citeauthoryear{Tang and Wang}{Tang and Wang}{2018}]%
        {tang2018personalized}
\bibfield{author}{\bibinfo{person}{Jiaxi Tang} {and} \bibinfo{person}{Ke
  Wang}.} \bibinfo{year}{2018}\natexlab{}.
\newblock \showarticletitle{Personalized top-n sequential recommendation via
  convolutional sequence embedding}. In \bibinfo{booktitle}{\emph{Proceedings
  of the Eleventh ACM International Conference on Web Search and Data Mining}}.
  \bibinfo{pages}{565--573}.
\newblock


\bibitem[\protect\citeauthoryear{Vaswani, Shazeer, Parmar, Uszkoreit, Jones,
  Gomez, Kaiser, and Polosukhin}{Vaswani et~al\mbox{.}}{2017}]%
        {vaswani2017attention}
\bibfield{author}{\bibinfo{person}{Ashish Vaswani}, \bibinfo{person}{Noam
  Shazeer}, \bibinfo{person}{Niki Parmar}, \bibinfo{person}{Jakob Uszkoreit},
  \bibinfo{person}{Llion Jones}, \bibinfo{person}{Aidan~N Gomez},
  \bibinfo{person}{{\L}ukasz Kaiser}, {and} \bibinfo{person}{Illia
  Polosukhin}.} \bibinfo{year}{2017}\natexlab{}.
\newblock \showarticletitle{Attention is all you need}. In
  \bibinfo{booktitle}{\emph{Advances in neural information processing
  systems}}. \bibinfo{pages}{5998--6008}.
\newblock


\bibitem[\protect\citeauthoryear{Wu, Li, Hsieh, and Sharpnack}{Wu
  et~al\mbox{.}}{2020}]%
        {wu2020sse}
\bibfield{author}{\bibinfo{person}{Liwei Wu}, \bibinfo{person}{Shuqing Li},
  \bibinfo{person}{Cho-Jui Hsieh}, {and} \bibinfo{person}{James Sharpnack}.}
  \bibinfo{year}{2020}\natexlab{}.
\newblock \showarticletitle{SSE-PT: Sequential recommendation via personalized
  transformer}. In \bibinfo{booktitle}{\emph{Fourteenth ACM Conference on
  Recommender Systems}}. \bibinfo{pages}{328--337}.
\newblock


\bibitem[\protect\citeauthoryear{Wu, Gao, Gao, Weng, and Chen}{Wu
  et~al\mbox{.}}{2019}]%
        {wu2019dual}
\bibfield{author}{\bibinfo{person}{Qitian Wu}, \bibinfo{person}{Yirui Gao},
  \bibinfo{person}{Xiaofeng Gao}, \bibinfo{person}{Paul Weng}, {and}
  \bibinfo{person}{Guihai Chen}.} \bibinfo{year}{2019}\natexlab{}.
\newblock \showarticletitle{Dual sequential prediction models linking
  sequential recommendation and information dissemination}. In
  \bibinfo{booktitle}{\emph{Proceedings of the 25th ACM SIGKDD International
  Conference on Knowledge Discovery \& Data Mining}}.
  \bibinfo{pages}{447--457}.
\newblock


\bibitem[\protect\citeauthoryear{Yan, Cheng, Kang, Wan, and McAuley}{Yan
  et~al\mbox{.}}{2019}]%
        {yan2019cosrec}
\bibfield{author}{\bibinfo{person}{An Yan}, \bibinfo{person}{Shuo Cheng},
  \bibinfo{person}{Wang-Cheng Kang}, \bibinfo{person}{Mengting Wan}, {and}
  \bibinfo{person}{Julian McAuley}.} \bibinfo{year}{2019}\natexlab{}.
\newblock \showarticletitle{CosRec: 2D convolutional neural networks for
  sequential recommendation}. In \bibinfo{booktitle}{\emph{Proceedings of the
  28th ACM International Conference on Information and Knowledge Management}}.
  \bibinfo{pages}{2173--2176}.
\newblock


\bibitem[\protect\citeauthoryear{Yu, Liu, Wu, Wang, and Tan}{Yu
  et~al\mbox{.}}{2016}]%
        {yu2016dynamic}
\bibfield{author}{\bibinfo{person}{Feng Yu}, \bibinfo{person}{Qiang Liu},
  \bibinfo{person}{Shu Wu}, \bibinfo{person}{Liang Wang}, {and}
  \bibinfo{person}{Tieniu Tan}.} \bibinfo{year}{2016}\natexlab{}.
\newblock \showarticletitle{A dynamic recurrent model for next basket
  recommendation}. In \bibinfo{booktitle}{\emph{Proceedings of the 39th
  International ACM SIGIR conference on Research and Development in Information
  Retrieval}}. \bibinfo{pages}{729--732}.
\newblock


\bibitem[\protect\citeauthoryear{Yuan, Karatzoglou, Arapakis, Jose, and
  He}{Yuan et~al\mbox{.}}{2019}]%
        {yuan2019simple}
\bibfield{author}{\bibinfo{person}{Fajie Yuan}, \bibinfo{person}{Alexandros
  Karatzoglou}, \bibinfo{person}{Ioannis Arapakis}, \bibinfo{person}{Joemon~M
  Jose}, {and} \bibinfo{person}{Xiangnan He}.} \bibinfo{year}{2019}\natexlab{}.
\newblock \showarticletitle{A simple convolutional generative network for next
  item recommendation}. In \bibinfo{booktitle}{\emph{Proceedings of the Twelfth
  ACM International Conference on Web Search and Data Mining}}.
  \bibinfo{pages}{582--590}.
\newblock


\bibitem[\protect\citeauthoryear{Yuan, Duan, Tong, Shi, and Zhang}{Yuan
  et~al\mbox{.}}{2021}]%
        {yuan2021icai}
\bibfield{author}{\bibinfo{person}{Xu Yuan}, \bibinfo{person}{Dongsheng Duan},
  \bibinfo{person}{Lingling Tong}, \bibinfo{person}{Lei Shi}, {and}
  \bibinfo{person}{Cheng Zhang}.} \bibinfo{year}{2021}\natexlab{}.
\newblock \showarticletitle{ICAI-SR: Item Categorical Attribute Integrated
  Sequential Recommendation}. In \bibinfo{booktitle}{\emph{Proceedings of the
  44th International ACM SIGIR Conference on Research and Development in
  Information Retrieval}}. \bibinfo{pages}{1687--1691}.
\newblock


\bibitem[\protect\citeauthoryear{Zhang, Zhao, Liu, Sheng, Xu, Wang, Liu, and
  Zhou}{Zhang et~al\mbox{.}}{2019}]%
        {zhang2019feature}
\bibfield{author}{\bibinfo{person}{Tingting Zhang}, \bibinfo{person}{Pengpeng
  Zhao}, \bibinfo{person}{Yanchi Liu}, \bibinfo{person}{Victor~S Sheng},
  \bibinfo{person}{Jiajie Xu}, \bibinfo{person}{Deqing Wang},
  \bibinfo{person}{Guanfeng Liu}, {and} \bibinfo{person}{Xiaofang Zhou}.}
  \bibinfo{year}{2019}\natexlab{}.
\newblock \showarticletitle{Feature-level Deeper Self-Attention Network for
  Sequential Recommendation.}. In \bibinfo{booktitle}{\emph{IJCAI}}.
  \bibinfo{pages}{4320--4326}.
\newblock


\bibitem[\protect\citeauthoryear{Zhao, Mu, Hou, Lin, Chen, Pan, Li, Lu, Wang,
  Tian, et~al\mbox{.}}{Zhao et~al\mbox{.}}{2021}]%
        {zhao2021recbole}
\bibfield{author}{\bibinfo{person}{Wayne~Xin Zhao}, \bibinfo{person}{Shanlei
  Mu}, \bibinfo{person}{Yupeng Hou}, \bibinfo{person}{Zihan Lin},
  \bibinfo{person}{Yushuo Chen}, \bibinfo{person}{Xingyu Pan},
  \bibinfo{person}{Kaiyuan Li}, \bibinfo{person}{Yujie Lu},
  \bibinfo{person}{Hui Wang}, \bibinfo{person}{Changxin Tian}, {et~al\mbox{.}}}
  \bibinfo{year}{2021}\natexlab{}.
\newblock \showarticletitle{Recbole: Towards a unified, comprehensive and
  efficient framework for recommendation algorithms}. In
  \bibinfo{booktitle}{\emph{Proceedings of the 30th ACM International
  Conference on Information \& Knowledge Management}}.
  \bibinfo{pages}{4653--4664}.
\newblock


\bibitem[\protect\citeauthoryear{Zheng, Fan, Lu, Zhang, and Yu}{Zheng
  et~al\mbox{.}}{2019}]%
        {zheng2019gated}
\bibfield{author}{\bibinfo{person}{Lei Zheng}, \bibinfo{person}{Ziwei Fan},
  \bibinfo{person}{Chun-Ta Lu}, \bibinfo{person}{Jiawei Zhang}, {and}
  \bibinfo{person}{Philip~S Yu}.} \bibinfo{year}{2019}\natexlab{}.
\newblock \showarticletitle{Gated Spectral Units: Modeling Co-evolving Patterns
  for Sequential Recommendation}. In \bibinfo{booktitle}{\emph{Proceedings of
  the 42nd International ACM SIGIR Conference on Research and Development in
  Information Retrieval}}. \bibinfo{pages}{1077--1080}.
\newblock


\bibitem[\protect\citeauthoryear{Zhou, Wang, Zhao, Zhu, Wang, Zhang, Wang, and
  Wen}{Zhou et~al\mbox{.}}{2020}]%
        {zhou2020s3}
\bibfield{author}{\bibinfo{person}{Kun Zhou}, \bibinfo{person}{Hui Wang},
  \bibinfo{person}{Wayne~Xin Zhao}, \bibinfo{person}{Yutao Zhu},
  \bibinfo{person}{Sirui Wang}, \bibinfo{person}{Fuzheng Zhang},
  \bibinfo{person}{Zhongyuan Wang}, {and} \bibinfo{person}{Ji-Rong Wen}.}
  \bibinfo{year}{2020}\natexlab{}.
\newblock \showarticletitle{S3-rec: Self-supervised learning for sequential
  recommendation with mutual information maximization}. In
  \bibinfo{booktitle}{\emph{Proceedings of the 29th ACM International
  Conference on Information \& Knowledge Management}}.
  \bibinfo{pages}{1893--1902}.
\newblock


\bibitem[\protect\citeauthoryear{Zimdars, Chickering, and Meek}{Zimdars
  et~al\mbox{.}}{2001}]%
        {DBLP:conf/uai/ZimdarsCM01}
\bibfield{author}{\bibinfo{person}{Andrew Zimdars},
  \bibinfo{person}{David~Maxwell Chickering}, {and}
  \bibinfo{person}{Christopher Meek}.} \bibinfo{year}{2001}\natexlab{}.
\newblock \showarticletitle{Using Temporal Data for Making Recommendations}. In
  \bibinfo{booktitle}{\emph{{UAI} '01: Proceedings of the 17th Conference in
  Uncertainty in Artificial Intelligence, University of Washington, Seattle,
  Washington, USA, August 2-5, 2001}},
  \bibfield{editor}{\bibinfo{person}{Jack~S. Breese} {and}
  \bibinfo{person}{Daphne Koller}} (Eds.). \bibinfo{publisher}{Morgan
  Kaufmann}, \bibinfo{pages}{580--588}.
\newblock
\urldef\tempurl%
\url{https://dslpitt.org/uai/displayArticleDetails.jsp?mmnu=1\&smnu=2\&article\_id=146\&proceeding\_id=17}
\showURL{%
\tempurl}


\end{thebibliography}

\end{sloppypar}
\end{document}